\newtheorem{theorem}{Theorem}[section]
\newtheorem{proposition}[theorem]{Proposition}
\newtheorem{remark}[theorem]{Remark}
\newtheorem{lemma}[theorem]{Lemma}
\newcommand{\RR}{\mathbb{R}}
\newcommand{\bbI}{\mathbb{I}}
\newcommand{\cE}{\mathcal{E}}
\newcommand{\C}{\mathbb{C}}
\newcommand{\R}{\mathbb{R}}
\newcommand{\N}{\mathbb{N}}
\newcommand\be{{\bold e}}
\newcommand\bt{{\bold t}}
\def\cE{{\mathcal E}}
\def\cF{{\mathcal F}}
\def\cG{{\mathcal G}}
\def\cI{{\mathcal I}}
\def\cJ{{\mathcal J}}
\def\rd{{\mathrm{d}}}
\def\re{{\mathrm{e}}}
\newcommand\1{{\ensuremath {\mathds 1} }} 
\def\bra{\langle}
\def\ket{\rangle}
\def\Tr{{\rm Tr} \, }
\newcommand{\myfootnote}[1]{
	\renewcommand{\thefootnote}{}
	\footnotetext{\scriptsize#1}
	\renewcommand{\thefootnote}{\arabic{footnote}}
}
\title[Phase transition in the Peierls model]{Phase Transition in the Peierls model for polyacetylene}
\author{David Gontier \qquad Ad\'echola E. K. Kouande \qquad \'Eric S\'er\'e}
\date{\today}
\begin{document}
\myfootnote{David Gontier: CEREMADE, Université Paris-Dauphine, PSL University,75016 Paris, France \& ENS/PSL University, DMA, F-75005, Paris, France;\\
email: \href{gontier@ceremade.dauphine.fr}{gontier@ceremade.dauphine.fr}}
\myfootnote{Adéchola E. K. Kouande: CEREMADE, University of Paris-Dauphine, PSL University, 75016 Paris, France\\
email: \href{kouande@ceremade.dauphine.fr}{kouande@ceremade.dauphine.fr}}
\myfootnote{Éric Séré: CEREMADE, University of Paris-Dauphine, PSL University, 75016 Paris, France\\
email: \href{sere@ceremade.dauphine.fr}{sere@ceremade.dauphine.fr}}


\begin{abstract}
  We consider the Peierls model for closed polyactetylene chains with an even number of carbon atoms as well as infinite chains, in the presence of temperature. We prove the existence of a critical temperature below which the chain is dimerized, and above which it is $1$-periodic. The chain behaves like an insulator below the critical temperature and like a metal above it. We characterize the critical temperature in the thermodynamic limit model, and prove that it is exponentially small in the rigidity of the chain. We study the phase transition around this critical temperature.
    \bigskip
    
    \noindent \sl \copyright~2022 by the authors. This paper may be reproduced, in its entirety, for non-commercial purposes.
\end{abstract}

\maketitle    

\tableofcontents


\section{Introduction}

It is a well known fact that in closed polyacetylene molecular chains having an even number of carbon atoms ({\em e.g.} benzene), the valence electrons arrange themselves one link in two. This phenomenon is well understood in the Peierls model, introduced in $1930$ (see~\cite[p.108]{peierls1996quantum} and \cite{frohlich1954theory}), which is a simple non-linear functional describing, in particular, polyacetylene chains. This model is invariant under $1$-translations, but there is a symmetry breaking: the minimizers are dimerized, in the sense that they are $2$-periodic, but not $1$-periodic. This is known as \textbf{Peierls instability} or {\bf Peierls distortion} and is responsible for the high diamagnetism and low conductivity of certain materials such as bismuth~\cite{jones1934applications}.

In this paper, we study the Peierls model with temperature, and describe the corresponding phase diagram. We prove the existence of a critical temperature below which the chain is dimerized, and above which the chain is $1$-periodic. We characterize this critical temperature, and study the transition around it. In order to state our main results, let us first recall what is known for the Peierls model without temperature.


\subsection{The Peierls model at null temperature}
We focus on the case of even chains: We consider a periodic linear chain with $L=2N$ classical atoms (for an integer $N \ge 2$), together with quantum non-interacting electrons. We denote by $t_i$ the distance between the $i$-th and $(i + 1)$-th atoms, and set $\{ \bt \}:=~\{ t_1, \cdots, t_L\}$. By periodicity, we mean that the atoms indices are taken modulo $L$. The electrons are represented by a one-body density matrix $\gamma$, which is a self-adjoint operator on $\ell^2(\C^L)$, satisfying the Pauli principle $0 \le \gamma \le 1$. In this simple model, the electrons can hop between nearest-neighbour atoms, and feel a Hamiltonian of the form
\begin{equation}\label{CH1_3}
    T = T( \{ \bt\} ) := \begin{pmatrix}
        0      & t_1    & 0      & 0         & \cdots   & t_{L} \\
        t_1    & 0      & t_2    & \cdots    & 0        & 0       \\
        0      & t_2    & 0      & t_3       & \cdots   & 0        \\
        \vdots & \vdots & \vdots & \ddots    & \vdots   & \vdots    \\
        0      & 0      & \cdots &  t_{L-2} & 0        & t_{L-1}   \\
        t_{L} & 0      & \cdots & 0         & t_{L-1} & 0
    \end{pmatrix}.
\end{equation}
The Peierls energy of such a system reads~\cite{kennedy2004proof, lieb1961two, macris1996flux,peierls1996quantum, su1979solitons}
$$
    \widetilde{\cE}_{\rm full}^{(L)}(\{ \tilde{\bt} \}, \gamma) := \frac{1}{2}g \sum_{i=1}^{L}(\tilde{t_i} - b)^2 + 2 \Tr (T \gamma).
$$
The first term is the distortion energy of the atoms. Here, $b > 0$ is the equilibrium distance between two atoms and $g > 0$ is the rigidity of the chain. The second term models the electronic energy of the valence electrons (the $2$ factor stands for the spin). By scaling, setting $\tilde{t_i} = b t_i$ and $\mu = gb$, we have $\widetilde{\cE}_{\rm full}^{(L)}(\{ \tilde{t} \}, \gamma) =  
b \cE_{\rm full}^{(L)}(\{ t\}, \gamma)$,
with the energy
\begin{equation} \label{eq:energy_t_gamma_null_temperature}
    \boxed{ \cE^{(L)}_{\rm full}(\{ \bt \}, \gamma) := \frac{\mu}{2} \sum_{i=1}^{L}(t_i - 1)^2 + 2 \Tr (T \gamma). }
\end{equation}
There is only one parameter in the model, which is the strength $\mu > 0 $. In the so-called {\em half-filled} model, this energy is minimized over all $t_i > 0$ and all one-body density matrices (there is no constraint on the number of electrons):
\begin{equation*} 
    \boxed{ E^{(L)} := \min \left\{ \cE^{(L)}_{\rm full}(\{ \bt\}, \gamma), \quad  \bt \in \R_+^L, \quad 0 \le \gamma = \gamma^* \le 1 \right\}.}
\end{equation*}
One can perform the minimization in $\gamma$ first. We get
\begin{equation} \label{eq:min_gamma_theta=0}
    \min_{0 \le \gamma = \gamma^* \le 1} 2 \Tr \left( T \gamma  \right) = 2 \Tr \left( T \1 (T < 0 )\right) = - \Tr( | T | ) = - \Tr \left( \sqrt{T^2} \right),
\end{equation}
where we used here that $T$ is unitarily equivalent to $-T$, so that its spectrum is symmetric with respect to the origin. The optimal density matrix in this case is $\gamma_* = \1(T < 0)$, which has $\Tr(\gamma_* ) = N$ electrons (hence the denomination {\em half-filled}). The energy simplifies into
\begin{equation*} 
    E^{(L)} = \min \left\{ \cE^{(L)}(\{ \bt \}), \quad \bt \in \R_+^L \right\}, \quad \text{with} \quad \cE^{(L)}(\{ \bt \}) := \frac{\mu}{2} \sum_{i=1}^{L}(t_i - 1)^2 -\Tr(\sqrt{T^2}).
\end{equation*}
The energy $\cE^{(L)}$ only depends on $\{ \bt \}$, and is translationally invariant, in the sense that $\cE^{(L)}(\{\bt\}) = \cE^{(L)}(\{\tau_k \bt\})$ where $\{\tau_k \bt\} := \{ t_{k+1}, \cdots, t_{k+L} \}$. However, the minimizers of this energy are usually $2$-periodic, as proved by Kennedy and Lieb~\cite{kennedy2004proof} and Lieb and Nachtergaele~\cite{lieb1995stability}. More specifically, they proved the following: \\

\underline{Case $L \equiv 0 \mod 4$.} There are two minimizing configurations for $E^{(2N)}$, of the form 
    \begin{equation}\label{CH1_8}
        t_i = W + (-1)^i\delta \mbox{ or } t_i = W - (-1)^i\delta, \qquad \text{with $\delta > 0$.}
    \end{equation}

These two configurations are called {\bf dimerized} configurations \cite{kivelson1982hubbard}: they are $2$-periodic but not $1$-periodic. In other words, it is energetically favorable for the chain to break the $1$-periodicity of the model. We prove in Appendix~\ref{gain_of_energy} that the corresponding gain of energy is actually exponentially small in the limit $\mu \to \infty$.\\

\underline{Case $L \equiv 2 \mod 4$.} 
This case is similar, but we may have $\delta = 0$ for small values of $L$, or large values of $\mu$ (see also~\cite{kivelson1982hubbard}). There is $0 < \mu_c(L) < \infty$ so that, for $0 < \mu < \mu_c(L)$, there are still two dimerized minimizers, as in~\eqref{CH1_8}, while for $\mu > \mu_c(L)$, there is only one minimizer, which is $1$-periodic, that is $\delta = 0$.

\medskip

In all cases (with $L$ even), one can restrict the minimization problem to configurations $\{ \bt \}$ of the form $t_i = W \pm (-1)^i \delta$, and obtain a minimization problem with only two parameters.

\medskip

    Although $L$ is always even in the present paper, let us mention that molecules with $L$ odd and very large have been studied at zero temperature by Garcia Arroyo and S\'er\'e~\cite{garciaarroyo-sere}. In that case one gets ``kink solutions" in the limit $L\to\infty$.


\subsection{The Peierls model with temperature, main results.}
In the present article, we extend the results in the positive temperature case by modifying the Peierls model in order to take the entropy of the electrons into account. We denote by $\theta$ the temperature (the letter $T$ is reserved for the matrix in~\eqref{CH1_3}). Following the general scheme described in \cite[Section 4]{BLS}, the free energy is now given by (compare with~\eqref{eq:energy_t_gamma_null_temperature})
\begin{equation}\label{eq:F_full_theta_t_gamma}
    \boxed{ \cF_{{\rm full}, \theta}^{(L)}(\{ \bt \}, \gamma) := \frac{\mu}{2} \sum_{i=1}^{L}(t_i - 1)^2 + 2 \left\{ \Tr (T \gamma) + \theta \Tr (S(\gamma)) \right\}},
\end{equation}
with $S(x) := x \ln(x) + (1 - x) \ln(1-x)$ the usual entropy function. We consider again the minimization over all one-body density matrices, and study the minimization problem
\begin{equation*}
    \boxed{ F^{(L)}_\theta := \min \left\{ \cF_{{\rm full}, \theta}^{(L)}(\{ \bt\}, \gamma), \quad \bt \in \R_+^L, \quad 0 \le \gamma = \gamma^* \le 1 \right\}.}
\end{equation*}
There are now two parameters in the model, namely $\mu$ and $\theta$. The main goal of the paper is to study the phase diagram in the $(\mu, \theta)$ plane.

 As in~\eqref{eq:min_gamma_theta=0}, one can perform the minimization in $\gamma$ first (see Section~\ref{ssec:proof:L1_10} for the proof).
\begin{lemma}\label{L1_10}
We have 
\begin{equation}\label{CH1_49}
        \min_{0 \le \gamma \le 1} 2 \left\{ \Tr (T \gamma) + \theta\Tr (S(\gamma))\right\} = 
        - \Tr \left( h_\theta(T^2)  \right),
\end{equation}
with the function
\[
    h_\theta(x) := 2 \theta \,\ln \left( 2 \cosh \left( \frac{\sqrt{x}}{2 \theta}  \right)  \right).
\]
The minimization problem in the l.h.s of~\eqref{CH1_49} has the unique minimizer $\gamma_* = (1+\re^{ T/\theta})^{-1}$.
\end{lemma}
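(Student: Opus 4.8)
The plan is to first solve the convex variational problem abstractly, identify the minimizer, and only then pass to the spectral representation of $T$ to compute the value.

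\medskip

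\noindent\textbf{Step 1: existence and uniqueness.} I would note that on the compact convex set $\cK := \{\gamma = \gamma^* \ \text{on}\ \ell^2(\C^L) : 0 \le \gamma \le 1\}$, the map $\gamma \mapsto \Tr(T\gamma) + \theta\Tr(S(\gamma))$ is continuous and strictly convex: the linear part is trivially convex, and $\gamma \mapsto \Tr(S(\gamma))$ is convex because $S$ is convex on $[0,1]$ and $A \mapsto \Tr(f(A))$ inherits convexity from $f$, with strictness coming from the strict convexity of $S$ on $(0,1)$. Hence there is a unique minimizer $\gamma_*$.

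\medskip

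\noindent\textbf{Step 2: identification of $\gamma_*$.} Set $\gamma_* := (1+\re^{T/\theta})^{-1}$. Then $0 < \gamma_* < 1$, $1 - \gamma_* = \re^{T/\theta}(1+\re^{T/\theta})^{-1}$, so $\gamma_*(1-\gamma_*)^{-1} = \re^{-T/\theta}$ and therefore, in the functional calculus of $\gamma_*$,
\[
    S'(\gamma_*) = \ln\!\bigl(\gamma_*(1-\gamma_*)^{-1}\bigr) = -\,\tfrac{1}{\theta}\,T .
\]
The trace-gradient (Klein) inequality for the convex functional $\gamma \mapsto \Tr(S(\gamma))$ gives, for every $\gamma \in \cK$,
\[
    \Tr(S(\gamma)) \ \ge\ \Tr(S(\gamma_*)) + \Tr\!\bigl(S'(\gamma_*)(\gamma - \gamma_*)\bigr) = \Tr(S(\gamma_*)) - \tfrac{1}{\theta}\Tr\!\bigl(T(\gamma-\gamma_*)\bigr),
\]
which rearranges exactly to $\Tr(T\gamma) + \theta\Tr(S(\gamma)) \ge \Tr(T\gamma_*) + \theta\Tr(S(\gamma_*))$, with equality iff $\gamma = \gamma_*$ by strict convexity. (Equivalently one may argue by Euler--Lagrange: since $S'(x) \to \mp\infty$ as $x \to 0^+,\,1^-$, the minimizer cannot have an eigenvalue on the boundary, so stationarity reads $T + \theta\ln(\gamma_*(1-\gamma_*)^{-1}) = 0$, which solves to the same $\gamma_*$.)

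\medskip

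\noindent\textbf{Step 3: evaluation of the minimum.} Since $\gamma_*$ is a function of $T$, diagonalize $T$ with real eigenvalues $\lambda_1,\dots,\lambda_L$. For each eigenvalue, writing $f = (1+\re^{\lambda/\theta})^{-1}$ one checks by a direct computation that $\lambda f + \theta S(f) = \lambda - \theta\ln(1+\re^{\lambda/\theta})$. Now I would use that the spectrum of $T$ is symmetric about $0$ (recalled above, via $T$ unitarily equivalent to $-T$) to group the eigenvalues into $L/2$ pairs $\{\lambda,-\lambda\}$; each pair contributes
\[
    -\theta\ln\!\bigl[(1+\re^{\lambda/\theta})(1+\re^{-\lambda/\theta})\bigr] = -\theta\ln\!\bigl[2 + 2\cosh(\lambda/\theta)\bigr] = -2\theta\ln\!\bigl(2\cosh(\lambda/2\theta)\bigr) = -\,h_\theta(\lambda^2),
\]
using the identity $1+\cosh(2u) = 2\cosh^2 u$. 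Summing over the $L/2$ pairs yields $\Tr(T\gamma_*) + \theta\Tr(S(\gamma_*)) = -\tfrac12\Tr\!\bigl(h_\theta(T^2)\bigr)$, since each value $\lambda^2$ occurs twice in the spectrum of $T^2$; multiplying by the prefactor $2$ in~\eqref{CH1_49} gives $-\Tr(h_\theta(T^2))$.

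\medskip

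I do not expect a genuine obstacle. The only points requiring care are the justification that $0 < \gamma_* < 1$ strictly (so that $\ln(\gamma_*(1-\gamma_*)^{-1})$ is well defined and the Klein inequality applies), and the bookkeeping of the factor $2$ together with the two-to-one map $\lambda \mapsto \lambda^2$ on the symmetric spectrum; the convexity of $\gamma \mapsto \Tr(S(\gamma))$ and the associated gradient inequality are standard and can simply be quoted.
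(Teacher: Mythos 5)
Your proof is correct and follows essentially the same route as the paper: identify $\gamma_* = (1+\re^{T/\theta})^{-1}$ as the unique minimizer via (strict) convexity and the Euler--Lagrange equation, then evaluate the minimum using the symmetry of the spectrum of $T$. The only cosmetic difference is that you compute eigenvalue-by-eigenvalue and pair $\pm\lambda$, whereas the paper performs the operator computation directly and invokes $\Tr(T)=0$; both rest on the same fact that $T$ is unitarily equivalent to $-T$.
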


The properties of the function $h_\theta$ is given below in Proposition~\ref{prop:htheta}. The free Peierls energy therefore simplifies into a minimization problem in $\{ \bt \}$ only:
\begin{equation} \label{eq:free_energy}
     F^{(L)}_\theta = \inf \left\{ \cF^{(L)}_\theta( \{\bt\} ), \ \bt \in \R^L_+  \right\},
     \quad \text{with} \quad
    \cF^{(L)}_\theta( \{\bt\}) := \frac{\mu}{2} \sum_{i=1}^L (t_i - 1)^2 - \Tr \left( h_\theta(T^2)  \right).
\end{equation}

Our first theorem states that minimizers are always $2$-periodic, and that they become $1$-periodic when the temperature is large enough (phase transition).
\begin{theorem}\label{th:main} ~
For any $L=2N$, with $N$ an integer and $N \ge 2 $, there exists a critical temperature $\theta_c^{(L)}:=\theta_c^{(L)}(\mu) \ge 0$ such that:
\begin{itemize}
    \item for $\theta \ge \theta_c^{(L)}$, the minimizer of $\cF_\theta^{(L)}$ is unique and $1$-periodic;
    \item for $\theta \in (0,  \theta_c^{(L)})$ (this set is empty if $\theta_c^{(L)} = 0$), there are exactly two minimizers, which are dimerized, of the form~\eqref{CH1_8}.
\end{itemize}
In addition,
\begin{enumerate}[i)]
\item If $L\equiv0\mod4$, this critical temperature is positive ($\theta_c^{(L)}(\mu) > 0$ for all $\mu > 0$).
\item If $L\equiv 2\mod 4$, there is $\mu_c:= \mu_c(L) > 0$ such that for $\mu \le \mu_c$, $\theta_c^{(L)} $ is positive ($\theta_c^{(L)} >0$),  whereas for $\mu > \mu_c$, $\theta_c^{(L)} = 0$. Moreover as a function of $L$ we have $\mu_c(L) \sim \frac{2}{\pi}\ln(L)$ at $+\infty$.
\end{enumerate}
\end{theorem}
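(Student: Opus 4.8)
The plan is to first reduce the problem, as the authors suggest after \eqref{CH1_8}, to the two-parameter family $t_i = W + (-1)^i \delta$. The key point is that for such configurations the matrix $T^2$ block-diagonalizes under Bloch–Floquet transform into $2\times 2$ blocks indexed by $k$ in the dual lattice, with eigenvalues $W^2 + \delta^2 \pm 2 W \delta \cos(k)$ up to reindexing (equivalently, the spectrum of $T$ is symmetric and given by $\pm\sqrt{W^2+\delta^2 + 2W\delta\cos k}$). Plugging this into $\cF^{(L)}_\theta$ gives an explicit function $f^{(L)}(W,\delta,\theta) = \tfrac{\mu}{2}L\big((W-1)^2+\delta^2\big) - \sum_k h_\theta\big(W^2+\delta^2+2W\delta\cos k\big)$, even in $\delta$. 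The first task is to justify the reduction: one must show every minimizer of $\cF^{(L)}_\theta$ is of this dimerized form. I would follow the Kennedy–Lieb reflection-positivity argument \cite{kennedy2004proof,lieb1995stability} used at $\theta=0$, checking that it carries over because $h_\theta$ is still operator-monotone-type concave in the relevant sense; since the paper will have established properties of $h_\theta$ in Proposition \ref{prop:htheta}, I expect this to go through, using that $-\Tr h_\theta(T^2)$ is a concave function of the squared hopping amplitudes, or a chessboard-estimate argument. The optimal $W=W(\delta,\theta)$ is then determined by the $W$-stationarity equation, and one is left with a one-dimensional problem in $\delta^2$.

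Next, define $\phi^{(L)}_\theta(\delta^2) := f^{(L)}(W(\delta,\theta),\delta,\theta)$ as a function of $u := \delta^2 \ge 0$. The statement about $\theta_c^{(L)}$ will follow from showing that $\phi^{(L)}_\theta$ is such that either its minimum is at $u=0$ (the $1$-periodic phase) or at a unique $u^*>0$; and that the transition in $\theta$ is monotone. The cleanest route: compute $\partial_u \phi^{(L)}_\theta(0)$, the derivative at the origin. This is where the Peierls mechanism lives — at $\theta = 0$ the logarithmic singularity of $h_0(x) = \sqrt{x}$ at $x=0$ (the gap closing at the band edge $k$ with $\cos k = -1$, i.e. the Fermi point) makes $\partial_u \phi^{(L)}_0(0) = -\infty$ in the thermodynamic limit, forcing $u^* > 0$; at positive $\theta$, $h_\theta$ is smooth, $h_\theta''(0)$ finite, so $\partial_u\phi^{(L)}_\theta(0)$ is finite and increasing in $\theta$ (since $\partial_\theta h_\theta$ has a sign via $S$ concave / $h_\theta$ increasing and concave in $\theta$ suitably). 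One then needs: (a) sign of $\partial_u \phi^{(L)}_\theta(0)$ changes exactly once as $\theta$ increases, and (b) when it is negative there is a unique interior critical point, when nonnegative $u=0$ is the global min. Point (b) requires some convexity/shape control of $\phi^{(L)}_\theta$; I would prove it by analyzing $h_\theta(W^2+u+2W\sqrt u\cos k)$ as a function of $u$ — its second derivative in $u$ — and summing over $k$, possibly after passing to the integral $\int_0^{2\pi}$ for the leading behavior and treating the finite-$L$ sum as a perturbation. This monotone-in-$\theta$ uniqueness is what pins down $\theta_c^{(L)}$ and gives the dichotomy in the bulleted list; that exactly two minimizers $\pm\delta$ occur below $\theta_c$ is then immediate from the evenness in $\delta$.

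For part (i), $L \equiv 0 \bmod 4$: here $N$ is even, so the sum over $k \in \tfrac{2\pi}{N}\Z$ (dual to the $2$-atom unit cell, so $N$ values of $k$) includes $k = \pi$, i.e. $\cos k = -1$. Thus $\phi^{(L)}_\theta$ contains the term $-h_\theta\big((W-\delta)^2\big)$ (or $(W+\delta)^2$), and $\partial_u$ of this term at $u = 0$ involves $h_\theta'(W^2) \cdot (\text{finite}) $ but crucially $h_\theta'$ near $0$ behaves like $\tfrac{1}{2\theta}\tanh(\sqrt x/2\theta)\cdot x^{-1/2}\cdot\sqrt x \to$ finite, yet as $\theta\to 0^+$ this contribution is bounded while... — more precisely one shows $\partial_u\phi^{(L)}_\theta(0) < 0$ for \emph{all} $\theta > 0$ because the $k=\pi$ term's $u$-derivative at $0$ tends to $-\infty$ as $u\to 0$ is false at $\theta>0$; instead the correct mechanism is that $h_\theta$ being strictly concave makes the second difference negative — I would instead directly show $\phi^{(L)}_\theta(u) < \phi^{(L)}_\theta(0)$ for small $u$ by a second-order Taylor expansion in $\sqrt u$, where the presence of $\cos k = -1$ produces a term $\propto -h_\theta''\big((W-\delta)^2\big)\cdot(\cdots)$ with the right sign, uniformly in $\theta$. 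For part (ii), $L \equiv 2 \bmod 4$: $N$ is odd, $k = \pi$ is \emph{not} in the dual lattice, the band gap never closes at a sampled momentum, $\phi^{(L)}_\theta$ is smooth and analytic in $u$ at $u=0$ with finite derivative, and $\partial_u\phi^{(L)}_0(0)$ is finite; it is negative iff $\mu$ is below a threshold $\mu_c(L)$ (solve $\partial_u\phi^{(L)}_0(0) = 0$ for $\mu$), recovering the known $\theta = 0$ picture, and then monotonicity in $\theta$ gives $\theta_c^{(L)}>0 \iff \mu < \mu_c(L)$. The asymptotics $\mu_c(L) \sim \tfrac 2\pi \ln L$ comes from evaluating $\partial_u\phi^{(L)}_0(0)$ explicitly: it is (up to constants) $\mu L - c\sum_{k} (\text{something like } |W+\delta\cos k|^{-1})$, and the sum over the $N$ momenta closest to the Fermi point behaves like $\sum_{j} \tfrac{1}{|k_j - \pi|} \sim \ln N \sim \ln L$, with the $\tfrac 2\pi$ coming from the density of states $\tfrac{1}{2\pi}$ of the continuum and the factor $2$ from spin/band symmetry. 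I expect the main obstacle to be part (b) above — proving the global uniqueness of the interior minimizer of $\phi^{(L)}_\theta$ for \emph{all} $\mu,\theta$ (not just small $u$ or large $\mu$) — since this is a genuinely nonperturbative statement about a sum of shifted copies of $h_\theta$, and likely requires a clever convexity change of variables (perhaps working with $W^2 \pm u$ type variables, or exploiting log-convexity of $2\cosh$) rather than brute force.
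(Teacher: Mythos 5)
Your skeleton --- the Kennedy--Lieb concavity/averaging reduction to the dimerized family \eqref{CH1_8}, the dichotomy according to whether the gap-closing momentum is sampled ($L\equiv 0$ versus $L\equiv 2 \bmod 4$), and a harmonic sum producing $\mu_c(L)\sim\frac{2}{\pi}\ln L$ --- is indeed the paper's. But two gaps are genuine. First, your dispersion relation is wrong: for $t_i=W+(-1)^i\delta$ the hoppings alternate between $W+\delta$ and $W-\delta$, so the eigenvalues of $T^2$ are $(W+\delta)^2+(W-\delta)^2+2(W^2-\delta^2)\cos k=4W^2\cos^2(k/2)+4\delta^2\sin^2(k/2)$ (this is \eqref{eq:def:gthetaL}), not $W^2+\delta^2+2W\delta\cos k$. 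This is not cosmetic: with your formula the distinguished momentum contributes $-h_\theta\bigl((W-\delta)^2\bigr)$, which is smooth and nondegenerate at $\delta=0$, so the singular mechanism you invoke for part (i) evaporates (you notice this yourself mid-argument and do not repair it). With the correct spectrum, when $L\equiv 0\bmod 4$ the sampled momenta with $\cos(2k\pi/L)=0$ each contribute $-\frac1L h_\theta(4\delta^2)$, whose derivative in $u=\delta^2$ at $u=0$ is $-\frac{1}{L\theta}$ (since $h_\theta'(0)=\frac{1}{4\theta}$); this term, which blows up as $\theta\to 0$, is what makes dimerization favorable at small $\theta$ for every $\mu$, and it reappears in the paper as the divergent $x/n$ term of $\cJ_{2n}$ in \eqref{Eq_J_L}.

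The larger gap is the one you flag yourself, your step (b): you need, for all $\mu,\theta$, that the reduced one-dimensional energy $\phi^{(L)}_\theta(u)$ attains its minimum either at $u=0$ or at a unique $u^*>0$, and that the transition in $\theta$ occurs exactly once; without this you obtain neither the exact count of minimizers nor that the dimerized phase is exactly the interval $(0,\theta_c^{(L)})$. Note that for fixed $W$ the map $u\mapsto g^{(L)}_\theta(W,\sqrt u)$ is actually convex (each $-h_\theta(4W^2\cos^2+4u\sin^2)$ is convex in $u$ since $h_\theta$ is concave and the argument is affine), but this convexity is not obviously inherited after minimizing over $W$, and your claim that $\partial_u\phi^{(L)}_\theta(0)$ is increasing in $\theta$ is unproven because the optimal $W$ depends on $\theta$ as well. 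The paper never proves such a global shape statement: it gets the multiplicity of minimizers at each fixed $\theta$ directly from the averaging argument (the minimizers are exactly the one or two configurations realizing the optimal triple $(x_*,y_*,z_*)$), disposes of large temperatures by an explicit concavity estimate showing $\delta=0$ whenever $\theta\ge 1/\mu$ (Lemma~\ref{lem:existence_thetac_G}), and pins down a unique bifurcation temperature by factoring $\delta$ out of the Euler--Lagrange system, setting $\delta=0$, and subtracting the two equations, which leaves a single scalar equation $\mu=\cJ(W/\theta)$ with $\cJ$ ($=\cJ_{2n}$ or $\cJ_{2n+1}$) monotone increasing (Lemma~\ref{lem:uniqueness_of_theta_c_L}); in the case $L\equiv 2\bmod 4$ the finite limit of $\cJ_{2n+1}$ at infinity is precisely $\mu_c(L)$ in \eqref{eq:def:muc}, whose harmonic-sum evaluation gives the $\frac{2}{\pi}\ln L$ asymptotics you anticipated. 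The cheapest repair of your plan is to replace step (b) by this Euler--Lagrange-difference/monotone-$\cJ$ device.
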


This theorem only deals with an even number $L$ of atoms. One expects a similar behaviour for $L$ odd and large, but the arguments in the proof are not sufficient to guarantee this: they only imply that the minimizer is one-periodic when the temperature is large enough (see Remark \ref{odd}). We do not know what exactly happens for a small positive temperature and an odd number $L$.
\medskip

We postpone the proof of Theorem~\ref{th:main} until Section~\ref{sec:Peierls_T}. The first part uses the concavity of the function $h_\theta$ on $\RR_+$, while those of~$i)$ and~$ii)$ are based on the Euler-Lagrange equations.\\

\medskip

As in the null temperature case, minimizers are always $2$-periodic, hence the minimization problem is a minimization over the two variables $W$ and $\delta$. Actually, we have
\[
    F_\theta^{(2N)} = (2N) \min \left\{ g_\theta^{(2N)}(W, \delta), \quad W \ge 0, \ \delta \ge 0\right\},
\]
with the energy {\em per unit atom} (the following expression is justified below in Eqn.~\eqref{eq:gthetaL})
\begin{equation} \label{eq:def:gthetaL}
    g_\theta^{(2N)}(W, \delta) = \frac{\mu }{2} \left[ (W- 1)^2 + \delta^2 \right] -  \frac{1}{2N} \sum_{k=1}^{2N} h_\theta  \left( 4 W^2 \cos^2 \left( \frac{2 k \pi}{2N} \right) + 4 \delta^2 \sin^2 \left(  \frac{2 k \pi}{2N}  \right) \right).
\end{equation}
We recognize a Riemann sum in the last expression. This suggests that we can take the thermodynamic limit $L \to \infty$. This limit is quite standard in the physics literature on long polymers: many theoretical papers present models of polymers at null temperature that are directly written for infinite chains (see {\it e.g} \cite{su1979solitons}).\medskip

We define the thermodynamic limit free energy (per unit atom) as
\begin{equation}\label{thermo-energy}
    \boxed{ \displaystyle f_\theta := \displaystyle \liminf _{N \rightarrow + \infty} \frac{1}{2N}F_\theta^{(2N)}}.
\end{equation}
As expected, we have the following (see Section~\ref{sec:proof:justification_thermo} for the proof).
\begin{lemma}\label{lem:justification_thermo}
We have $f_\theta = \min \left\{ g_\theta(W, \delta), \quad W \ge 0, \ \delta \ge 0 \right\}$ with
    \[
     g_\theta(W, \delta) := \frac{\mu }{2} \left[ (W- 1)^2 + \delta^2 \right] - \frac{1}{2 \pi}\int_0^{2 \pi} h_\theta  \left( 4 W^2 \cos^2 ( s ) + 4 \delta^2 \sin^2 (s) \right) \rd s.
    \]
\end{lemma}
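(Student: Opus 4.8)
The plan is to reduce everything to the two–variable functionals. Since it has already been recorded that $\frac{1}{2N}F_\theta^{(2N)}=\min\{g_\theta^{(2N)}(W,\delta):W,\delta\ge 0\}$, it suffices to show that $\min g_\theta^{(2N)}\to\min g_\theta$ as $N\to\infty$; this will in particular show that the $\liminf$ in~\eqref{thermo-energy} is a genuine limit equal to $\min g_\theta$. The easy half is the upper bound: for any fixed $(W,\delta)$ the quantity $g_\theta^{(2N)}(W,\delta)$ is an equidistant Riemann sum which, by continuity of $h_\theta$ (Proposition~\ref{prop:htheta}), converges to $g_\theta(W,\delta)$; hence $\limsup_N \min g_\theta^{(2N)}\le\limsup_N g_\theta^{(2N)}(W,\delta)=g_\theta(W,\delta)$, and taking the infimum over $(W,\delta)$ gives $\limsup_N\min g_\theta^{(2N)}\le\min g_\theta$.

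For the matching lower bound I would first establish a coercivity estimate uniform in $N$. From $2\cosh u\le 2\re^{|u|}$ one gets $h_\theta(x)\le\sqrt{x}+2\theta\ln 2$ for all $x\ge 0$, and since $\sqrt{4W^2\cos^2 s+4\delta^2\sin^2 s}\le 2(W+\delta)$ pointwise in $s$, both $g_\theta^{(2N)}$ and $g_\theta$ are bounded below by $\frac{\mu}{2}\big[(W-1)^2+\delta^2\big]-2(W+\delta)-2\theta\ln 2$, which tends to $+\infty$ as $W^2+\delta^2\to\infty$. Because moreover $g_\theta^{(2N)}(1,0)<0$ and $g_\theta(1,0)<0$ (note $h_\theta>0$ since $\theta>0$), the minima of $g_\theta^{(2N)}$ and of $g_\theta$ are attained, and all of their minimizers lie in a single compact set $K=K(\mu,\theta)\subset\R_+^2$ that does not depend on $N$. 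In particular $\min g_\theta^{(2N)}=\min_K g_\theta^{(2N)}$ and $\min g_\theta=\min_K g_\theta$.

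It then remains to upgrade the pointwise convergence $g_\theta^{(2N)}\to g_\theta$ to uniform convergence on $K$. The map $(W,\delta,s)\mapsto h_\theta(4W^2\cos^2 s+4\delta^2\sin^2 s)$ is continuous, hence uniformly continuous, on the compact set $K\times[0,2\pi]$, so the functions $s\mapsto h_\theta(4W^2\cos^2 s+4\delta^2\sin^2 s)$ for $(W,\delta)\in K$ share a common modulus of continuity $\omega$; the elementary error bound for equidistant Riemann sums of $2\pi$–periodic functions then yields $|g_\theta^{(2N)}(W,\delta)-g_\theta(W,\delta)|\le\omega(\pi/N)$ for all $(W,\delta)\in K$ (the polynomial parts of $g_\theta^{(2N)}$ and $g_\theta$ are identical and cancel). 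Thus $\sup_K|g_\theta^{(2N)}-g_\theta|\to 0$, whence $|\min_K g_\theta^{(2N)}-\min_K g_\theta|\to 0$, i.e. $\frac{1}{2N}F_\theta^{(2N)}\to\min\{g_\theta(W,\delta):W,\delta\ge 0\}$, and the lemma follows.

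The routine parts — the pointwise Riemann–sum limit and the elementary inequalities for $h_\theta$ — are not the difficulty. The one point that needs a little care is the coercivity bound uniform in $N$: it is exactly what confines every minimizer to one fixed compact set and thereby turns "pointwise convergence plus equicontinuity" into convergence of the minima (a soft $\Gamma$–convergence–type argument). Everything else is then automatic.
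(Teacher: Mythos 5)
Your proposal is correct and follows essentially the same route as the paper: the easy inequality via pointwise convergence of the Riemann sums, and the other via the uniform-in-$N$ bound $h_\theta(x)\le\sqrt{x}+2\theta\ln 2$ giving coercivity, so that all minimizers lie in one fixed bounded set where the Riemann-sum error is controlled uniformly. The only (cosmetic) difference is that the paper extracts a convergent subsequence of the minimizers $(W_{2N},\delta_{2N})$ and bounds the sum-versus-integral error through a uniform bound on the derivative of the integrand, whereas you establish uniform convergence of $g_\theta^{(2N)}$ to $g_\theta$ on the compact set via equicontinuity and conclude directly that the minima converge.
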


The next theorem is similar to Theorem~\ref{th:main}, and shows the existence of a critical temperature for the thermodynamic model. Its proof is postponed until Section~\ref{sec:proof:main_thermo}, and is based on the study of the Euler-Lagrange equations.\\

\begin{theorem}\label{th:main_thermo}~
There is a critical (thermodynamic) temperature $\theta_c = \theta_c(\mu) > 0$, which is always positive, and so that for all $\theta \ge \theta_c,$ the minimizer of $g_\theta$ satisfies $\delta = 0$, whereas for all $\theta < \theta_c$, it satisfies $\delta > 0$. \\
In the large $\mu$ limit, we have
\[
     \theta_c(\mu) \sim C\exp \left( - \frac{\pi}{4} \mu \right), \quad \text{with} \quad C \approx 0.61385.
\]
\end{theorem}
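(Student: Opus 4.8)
Throughout I use Lemma~\ref{lem:justification_thermo} to work with the two-variable functional $g_\theta$ on $\RR_+^2$. The main new device is a change of variables that makes the energy convex: set $P:=W^2$, $Q:=\delta^2$ and
\[
  \widehat g_\theta(P,Q):=g_\theta(\sqrt P,\sqrt Q)=\frac\mu2(\sqrt P-1)^2+\frac\mu2\,Q-\frac1{2\pi}\int_0^{2\pi}h_\theta\bigl(4P\cos^2 s+4Q\sin^2 s\bigr)\,\rd s .
\]
The map $P\mapsto\frac\mu2(\sqrt P-1)^2$ has second derivative $\mu/(4P^{3/2})>0$, hence is convex; $Q\mapsto\frac\mu2 Q$ is linear; and for each $s$ the map $(P,Q)\mapsto-h_\theta(4P\cos^2 s+4Q\sin^2 s)$ is convex, because $h_\theta$ is concave on $\RR_+$ (Proposition~\ref{prop:htheta}) and its argument is affine. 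Hence $\widehat g_\theta$ is strictly convex and, since $h_\theta(x)\le2\theta\ln2+\sqrt x$, coercive on $\RR_+^2$; in particular it has a unique minimizer, which already gives the uniqueness assertions of Theorems~\ref{th:main} and~\ref{th:main_thermo}.

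Since $\partial_P\widehat g_\theta(P,0)\to-\infty$ as $P\to0^+$, the minimizer $P_0(\theta)$ of $P\mapsto\widehat g_\theta(P,0)$ is positive; put $W_0(\theta):=\sqrt{P_0(\theta)}$. By convexity and the KKT conditions, the global minimizer of $\widehat g_\theta$ on $\RR_+^2$ lies on $\{Q=0\}$ (that is, $\delta=0$) if and only if $\partial_Q\widehat g_\theta(P_0(\theta),0)\ge0$; and, using $h_\theta'(x)=\frac1{2\sqrt x}\tanh\frac{\sqrt x}{2\theta}$, one computes $\partial_Q\widehat g_\theta(P,0)=\frac\mu2-\frac12R(\sqrt P,\theta)$ with
\[
  R(W,\theta):=\frac1{\pi W}\int_0^{2\pi}\frac{\sin^2 s}{|\cos s|}\tanh\!\Bigl(\frac{W|\cos s|}{\theta}\Bigr)\rd s .
\]
So the minimizer is $1$-periodic iff $R(W_0(\theta),\theta)\le\mu$. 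Now $R(W,\theta)$ is decreasing in $\theta$, and since $\partial_\theta\partial_Q\widehat g_\theta>0$ a monotone comparative statics (Topkis) argument shows that the optimal $\delta^{\star}(\theta)^2$ is non-increasing in $\theta$, whence $\{\theta>0:\delta^{\star}=0\}$ is a half-line $[\theta_c,\infty)$. Finally $\tanh t\le t$ gives $R(W,\theta)\le1/\theta$ for all $W$, so $\delta^{\star}=0$ as soon as $\theta\ge1/\mu$; and as $\theta\to0^+$ one has $W_0(\theta)\to1+\frac4{\pi\mu}<\infty$, whereas $R(W_0(\theta),\theta)\to\frac1{\pi W_0(0)}\int_0^{2\pi}\frac{\sin^2 s}{|\cos s|}\,\rd s=+\infty$ (logarithmic divergence at $s=\pm\pi/2$ --- this is the Peierls instability). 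Hence $0<\theta_c\le1/\mu$, which is the first assertion.

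For the asymptotics, at $\theta=\theta_c$ the minimizer is $(W_0,0)$ with $W_0=W_0(\theta_c)$, and by continuity $R(W_0,\theta_c)=\mu$; together with the stationarity $\partial_P\widehat g_{\theta_c}(W_0^2,0)=0$ this gives the system
\[
  \mu(W_0-1)=\frac1\pi\int_0^{2\pi}|\cos s|\tanh\!\Bigl(\frac{W_0|\cos s|}{\theta_c}\Bigr)\rd s,\qquad
  \mu=\frac4{\pi W_0}\int_0^1\frac{\sqrt{1-x^2}}{x}\tanh\!\Bigl(\frac{W_0 x}{\theta_c}\Bigr)\rd x ,
\]
the second relation following from $R(W_0,\theta_c)=\mu$ by symmetry and the substitution $x=\cos s$. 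From $\theta_c\le1/\mu\to0$ and the first relation, $W_0\to1$ with $\mu(W_0-1)\to\frac4\pi$, so $\lambda:=W_0/\theta_c\to+\infty$ and $\frac{\pi\mu W_0}{4}=\frac{\pi\mu}{4}+1+o(1)$. In the second relation write $\frac{\sqrt{1-x^2}}{x}=\frac1x-\frac{1-\sqrt{1-x^2}}{x}$: the first piece contributes $\int_0^{\lambda}\frac{\tanh y}{y}\,\rd y=\ln\lambda+\kappa+o(1)$ with $\kappa:=\lim_{A\to\infty}\bigl(\int_0^A\tfrac{\tanh y}{y}\,\rd y-\ln A\bigr)$, and the second converges to $\int_0^1\frac{1-\sqrt{1-x^2}}{x}\,\rd x=1-\ln2$. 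Thus $\frac{\pi\mu W_0}{4}=\ln(W_0/\theta_c)+\kappa-1+\ln2+o(1)$, and with $\ln W_0\to0$ and $\frac{\pi\mu W_0}{4}=\frac{\pi\mu}{4}+1+o(1)$ we obtain $\ln\theta_c=-\frac{\pi\mu}{4}+(\kappa+\ln2-2)+o(1)$, i.e. $\theta_c(\mu)\sim C\,\re^{-\pi\mu/4}$ with $C=2\,\re^{\kappa-2}$. A closed-form evaluation of $\kappa$ (for instance via $\frac{\tanh y}{y}=\sum_{k\ge1}\frac8{(2k-1)^2\pi^2+4y^2}$, which yields $\kappa=\ln(4\re^{\gamma_{\rm E}}/\pi)$ with $\gamma_{\rm E}$ Euler's constant) produces the numerical value $C\approx0.61385$.

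The main obstacle is the sharpness of the transition: in the original $(W,\delta)$ variables, for $\theta$ slightly above $\theta_c$ there remain small-$W$ configurations at which turning on $\delta>0$ is locally favourable, and ruling out that they beat the $1$-periodic minimizer is delicate. It is precisely the convexification $(W,\delta)\mapsto(W^2,\delta^2)$, powered by the concavity of $h_\theta$, that reduces the whole question to the scalar comparison $R(W_0(\theta),\theta)\lessgtr\mu$ and the monotonicity of $\theta\mapsto\delta^{\star}(\theta)^2$. In the asymptotic part the one subtle point is the careful accounting of the three $O(1)$ contributions --- the shift $+1$ from $W_0=1+\tfrac4{\pi\mu}+\cdots$, the regularization constant $\kappa$, and $1-\ln2$ --- together with the closed-form value of $\kappa$, which is what pins down $C$.
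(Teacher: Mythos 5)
Most of your argument is sound, and in places sharper than the paper's: the convexification in $(P,Q)=(W^2,\delta^2)$ together with the KKT criterion cleanly reduces the question to the scalar comparison $R(W_0(\theta),\theta)\lessgtr\mu$, your bound $R\le 1/\theta$ reproduces the paper's Lemma~\ref{lem:existence_thetac_G}, and your asymptotic computation is exactly the paper's (your $\kappa$ is the paper's $c_1$, and your closed form $C=\tfrac{8}{\pi}\,\re^{\gamma_{\rm E}-2}\approx 0.6139$ matches $0.61385$; this closed form is a nice addition). The one genuine gap is the single-crossing step, i.e.\ the claim that $\delta^\star(\theta)^2$ is non-increasing in $\theta$ ``by Topkis''. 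Monotone comparative statics for the \emph{joint} minimizer requires, besides increasing differences in (decision, parameter), submodularity of the objective in the decision variables; here
\[
\partial_P\partial_Q\,\widehat g_\theta(P,Q)\;=\;-\frac{8}{\pi}\int_0^{2\pi}\cos^2 s\,\sin^2 s\;h_\theta''\bigl(4P\cos^2 s+4Q\sin^2 s\bigr)\,\rd s\;>\;0
\]
because $h_\theta$ is concave, so $\widehat g_\theta$ is strictly \emph{supermodular} in $(P,Q)$: the two variables are substitutes in the minimization. As $\theta$ increases, the direct effect $\partial_\theta\partial_Q\widehat g_\theta>0$ pushes $Q$ down, but $P_0(\theta)$ decreases (since $\partial_\theta\partial_P\widehat g_\theta>0$), which lowers $\partial_Q\widehat g_\theta$ and pushes $Q$ up. Your scalar criterion shows the same tension: $R(W,\theta)$ is decreasing in $\theta$ at fixed $W$ but also decreasing in $W$, and $W_0(\theta)$ decreases in $\theta$, so $\theta\mapsto R(W_0(\theta),\theta)$ is not obviously monotone and the set $\{\theta:\,R(W_0(\theta),\theta)\le\mu\}$ is not obviously a half-line. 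Without this, the assertion that the minimizer is dimerized for \emph{all} $\theta<\theta_c$ (and the identity $R(W_0,\theta_c)=\mu$ from which your asymptotics start) is not established.

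The gap is fixable by precisely the device the paper uses. At any temperature where $\partial_Q\widehat g_\theta(P_0(\theta),0)=0$, both equations of~\eqref{eq:3eqt_EL} hold; subtracting them gives $\mu=\cJ\bigl(W_0(\theta)/\theta\bigr)$ with $\cJ$ strictly increasing (see~\eqref{Eq_of_J}), so $W_0(\theta)/\theta$ is pinned to the single value $\cJ^{-1}(\mu)$ at every such temperature. Since $W_0(\theta)$ is non-increasing in $\theta$ (from $\partial_\theta\partial_P\widehat g_\theta>0$ and strict convexity in $P$) while $\theta\,\cJ^{-1}(\mu)$ is strictly increasing, there is at most one such temperature; combined with your endpoint signs ($R(W_0(\theta),\theta)\to+\infty$ as $\theta\to0^+$, and $R\le 1/\theta\le\mu$ for $\theta\ge 1/\mu$) and continuity, the crossing is unique and $\{\theta:\,\delta^\star(\theta)=0\}=[\theta_c,\infty)$, which restores your argument. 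With that repair, your route --- convexity in $(W^2,\delta^2)$ giving uniqueness plus a KKT scalar criterion, versus the paper's analysis of the Euler--Lagrange branches --- is a genuinely different and arguably cleaner path to the same characterization and asymptotics of $\theta_c$.
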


This reflects the fact that for an infinite chain, there is a transition between the dimerized states ($\delta > 0$), which are insulating (actually, one can show that the gap of the $T$ matrix is of order $\delta$), to the $1$-periodic state (with $\delta = 0$), which is metallic, as the temperature increases. This can be interpreted as an insulating/metallic transition for polyacetylene. Such a phase transition has been observed experimentally in the blue bronze in~\cite{pouget1983evidence}. We display in Figure~\ref{fig:fig} (left) the map $\mu \mapsto \theta_c(\mu)$ in the $(\mu, \theta)$ plane.

\medskip

In~\eqref{thermo-energy}, we only consider the limit $L = 2N \to \infty$ to define the thermodynamic critical temperature $\theta_c$. Note that the cases $L\equiv0\mod4$ and $L\equiv2\mod4$ merge when $L$ tends to infinity: this is consistent with the fact that the critical stiffness $\mu_c(L)$  tends to infinity as $L\to\infty$ in Theorem \ref{th:main}. We also expect odd chains to behave like even chains, but the study of the odd case is more delicate since we do not have an analogue of~\eqref{eq:def:gthetaL} and we leave it for future work.

\medskip

Finally, we study the nature of the transition. It is not difficult to see that $\delta \to 0$ as $\theta \to \theta_c$. Actually, there is a bifurcation around this critical temperature, see also Figure~\ref{fig:fig} (right).
\begin{theorem}\label{th:bifurcation}
There is $C > 0,$ such that $\delta(\theta) = C\sqrt{(\theta_c -  \theta)_+} + o\left(\sqrt{(\theta_c -  \theta)_+}\right).$
\end{theorem}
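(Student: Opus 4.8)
The plan is to analyze the Euler--Lagrange system for $g_\theta(W,\delta)$ near the critical temperature by a standard Lyapunov--Schmidt / bifurcation-at-a-simple-root argument. Write the first-order conditions $\partial_W g_\theta = 0$ and $\partial_\delta g_\theta = 0$. The second equation always has the trivial branch $\delta = 0$; near $\theta_c$ we look for the nontrivial branch. Dividing the $\delta$-equation by $\delta$ (for $\delta>0$) gives an equation of the form $\Phi(W,\delta^2,\theta)=0$, where $\Phi$ is smooth in $(W,\theta)$ and smooth in the variable $u:=\delta^2$ (this smoothness in $\delta^2$ rather than $\delta$ is the crucial structural point, and comes from the fact that $h_\theta$ is evaluated at $4W^2\cos^2 s + 4\delta^2\sin^2 s$, so the integrand is even in $\delta$). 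Coupling this with the $W$-equation, and using that at $\theta=\theta_c$, $\delta=0$ the pair $(W,\theta)$ sits at a point where the Jacobian in $(W,u)$ is nonsingular precisely in the $W$-direction, we solve for $W = W(u,\theta)$ by the implicit function theorem and reduce to a single scalar equation $\psi(u,\theta)=0$ with $\psi(0,\theta_c)=0$.

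Next I would compute the relevant partial derivatives of $\psi$ at $(0,\theta_c)$. The definition of $\theta_c$ in Theorem~\ref{th:main_thermo} is exactly the statement that the linearization of the $\delta$-equation about $\delta=0$ changes sign at $\theta_c$; concretely, $\partial_\theta \psi(0,\theta_c) \neq 0$ (the derivative of the "curvature in the $\delta$ direction" with respect to temperature is nonzero — this should follow from strict monotonicity of the map $\theta \mapsto \big(\tfrac{\mu}{2} - \text{(second variation term)}\big)$, which one reads off from the explicit $h_\theta$ and its derivatives, using Proposition~\ref{prop:htheta}). One then needs $\partial_u \psi(0,\theta_c) \neq 0$, i.e. the cubic (in $\delta$) coefficient in the $\delta$-equation is nonzero at criticality; this is where one must actually expand the integral $\int_0^{2\pi} h_\theta(4W^2\cos^2 s + 4\delta^2 \sin^2 s)\,\mathrm{d}s$ to order $\delta^4$ and check the sign of the coefficient involving $h_\theta''$. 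Granting both non-degeneracies, the implicit function theorem applied to $\psi$ gives a smooth branch $u = u(\theta)$ with $u(\theta_c)=0$ and $u'(\theta_c) = -\partial_\theta\psi/\partial_u\psi \neq 0$; matching signs with Theorem~\ref{th:main_thermo} (nontrivial minimizer exists only for $\theta<\theta_c$) forces $u'(\theta_c) < 0$, so $u(\theta) = u'(\theta_c)(\theta - \theta_c) + o(\theta-\theta_c) = |u'(\theta_c)|(\theta_c-\theta)_+ + o(\theta_c-\theta)$. Taking square roots yields $\delta(\theta) = \sqrt{u(\theta)} = C\sqrt{(\theta_c-\theta)_+} + o\!\big(\sqrt{(\theta_c-\theta)_+}\big)$ with $C = |u'(\theta_c)|^{1/2} > 0$.

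The main obstacle I expect is verifying the two non-degeneracy conditions at $(0,W_c,\theta_c)$ — in particular the sign/non-vanishing of the quartic coefficient $\partial_u\psi(0,\theta_c)$, which requires a careful (but elementary) computation with $h_\theta''$ and $h_\theta'''$ evaluated at $4W_c^2\cos^2 s$ and integration over $s$; one has to rule out an accidental cancellation. The transversality in $\theta$ is comparatively easier because it is essentially the definition of $\theta_c$ together with the strict monotonicity already used in proving Theorem~\ref{th:main_thermo}. A secondary technical point is justifying that the branch produced by Lyapunov--Schmidt is indeed the global minimizer near $\theta_c$ (and not merely a critical point); this follows from Theorem~\ref{th:main_thermo}, which already identifies the minimizer's $\delta$ for every $\theta$, so uniqueness of the nontrivial critical branch suffices.
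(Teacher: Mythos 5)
Your overall strategy is the same as the paper's: factor $\delta$ out of the second Euler--Lagrange equation, regard the resulting system as smooth in $(W,\Delta)$ with $\Delta=\delta^2$, apply the implicit function theorem at $(\theta_c,(W_*,0))$, obtain a branch with $\Delta(\theta)$ vanishing linearly at $\theta_c$, and take a square root. The only structural difference is that the paper applies the implicit function theorem once to the two-component map $\cF(\theta;(W,\Delta))$, while you propose a two-step Lyapunov--Schmidt reduction (eliminate $W$ first, then study a scalar $\psi(u,\theta)$); these are equivalent, your $\partial_u\psi$ being the Schur complement of the paper's $2\times2$ Jacobian.

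There is, however, a genuine gap: the two non-degeneracy conditions you explicitly defer (``granting both non-degeneracies'', ``one has to rule out an accidental cancellation'') are exactly the substance of the proof, and you give no argument for them. The paper settles them by structural sign arguments rather than a delicate expansion. Writing $A,B,C$ for the integrals of $h''\bigl(W_*^2\cos^2(s)/\theta_c^2\bigr)$ against $\cos^4(s)$, $\sin^2(s)\cos^2(s)$, $\sin^4(s)$ (all negative by concavity of $h$), the Jacobian determinant equals $-\frac{\mu}{W_*^2\theta_c}C+\frac{2W_*}{\theta_c^4}(AC-B^2)>0$ because $C<0$ and $B^2\le AC$ by Cauchy--Schwarz; and the explicit formula for $\Delta'(\theta_c)$ is strictly negative because $B\ge A$, which follows from a rearrangement inequality: $|A|>|C|$ since both $s\mapsto|h''(W_*^2\cos^2(s)/\theta_c^2)|$ and $s\mapsto\cos^4(s)$ are decreasing on $[0,\pi/2]$, hence $|B|^2\le|A|\,|C|<|A|^2$. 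Note that only $h''$ enters; no order-$\delta^4$ expansion or $h'''$ is needed. Your alternative of reading the sign of $u'(\theta_c)$ off Theorem~\ref{th:main_thermo} instead of computing it can be made to work, but it requires knowing that the minimizing branch $(W(\theta),\delta(\theta))$ converges to $(W_*,0)$ as $\theta\to\theta_c^-$ so that it is captured by the local uniqueness in the implicit function theorem (a point to be stated explicitly), and in any case it does not replace the two non-vanishing conditions: without them you cannot extract the $\sqrt{(\theta_c-\theta)_+}$ rate, i.e.\ the constant $C>0$ of the statement.
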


We postpone the proof of Theorem~\ref{th:bifurcation} until Section~\ref{bifurcation}. It mainly uses the implicit function theorem. The value of $C$ is explicit and is given in the proof.

\begin{figure}[!ht]
    \begin{subfigure}[b]{0.4\textwidth}
         \centering
         \includegraphics[scale = 0.4]{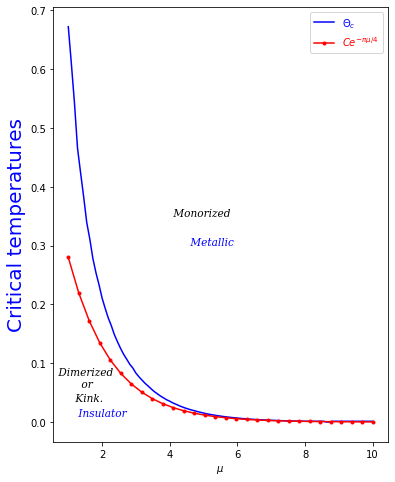}
     \end{subfigure}
 \begin{subfigure}[b]{0.5\textwidth}
         \centering
         \includegraphics[scale = 0.4]{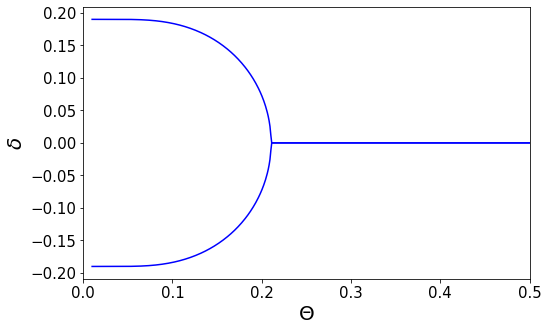}
\label{fig:bifurcation}
     \end{subfigure}
     \caption{Numerical simulations. (Left) the critical temperature $\mu \mapsto \theta_c(\mu)$ and its asymptotic $Ce^{-\frac{\pi}{4}\mu}$. (Right) The bifurcation of $\delta$ in the thermodynamic model. We took $\mu = 2$, and the critical temperature is found to be $\theta_c = 0.2112$.}
     \label{fig:fig}
\end{figure}

\section{Proofs in the finite chain Peierls model with temperature}
\label{sec:Peierls_T}

We now provide the proofs of our results. We gather in this section the proofs of the finite $L = 2N$ model, and postpone the ones of the thermodynamic model to the next Section. 

\subsection{Proof of Lemma~\ref{L1_10}, and properties of the $h$ functional}
\label{ssec:proof:L1_10}

First, we justify the functional $\cF^{(L)}_\theta$ appearing in~\eqref{eq:free_energy}, and provide the proof of Lemma~\ref{L1_10}.
\begin{proof}
 We study the minimization problem
\[
    \min_{0 \le \gamma \le 1} 2 \left\{ \Tr (T \gamma) + \theta \Tr (S(\gamma) ) \right\}.
\]
Any critical point $\gamma^*$ of the functional satisfies the Euler-Lagrange equation
\begin{equation} \label{eq:EL_gammastar}
    T + \theta S'(\gamma_*) = 0, \quad \text{that is}  \quad T + \theta \ln \left( \frac{\gamma_*}{1 - \gamma_*} \right) = 0.
\end{equation}
There is therefore only one such critical point, given by
\[
\gamma_* =  \frac{1}{1 + \re^{ T/\theta}} 
= \frac{\re^{-T/(2 \theta)}}{2 \cosh(T/(2 \theta))}, \quad \text{hence} \quad
1 - \gamma_* =  \frac{1}{1 + \re^{ -T/\theta}} = \frac{\re^{T/(2 \theta)}}{2 \cosh(T/(2 \theta))}.
\]
By convexity of the functional, this critical point is the minimizer. For this one-body density matrix, we obtain, using~\eqref{eq:EL_gammastar}
\begin{align*}
    2  \left\{ \Tr (T \gamma_*) + \theta \Tr (S(\gamma_*) ) \right\} & = 2 \Tr \left(\gamma_* \left[ T + \theta \ln \left(\frac{\gamma_*}{1 - \gamma_*}\right) \right] + \theta \ln (1 - \gamma_* )\right) \\
    & = 2 \theta \Tr (\ln(1 - \gamma_*)) =  2 \theta \Tr \left(T/2 \theta\right) - 2 \theta \Tr \left( \ln \left[ 2 \cosh(T/2\theta) \right] \right).
\end{align*}
Finally, since $T$ is unitary equivalent to $-T$, we have $\Tr(T) = 0$. This gives as wanted
\[
     \min_{0 \le \gamma \le 1} 2 \left\{ \Tr (T \gamma) + \theta \Tr (S(\gamma) ) \right\} = - \Tr \left( h_\theta(T^2)  \right), \quad \text{with} \quad h_\theta(x) := 2 \theta \ln \left( 2 \cosh \left( \frac{ \sqrt{x}}{2\theta} \right) \right).
\]
\end{proof}

Let us gather here some properties of the function $h_\theta$, that we will use throughout the article.
\begin{proposition} \label{prop:htheta}
    We have $h_\theta(x) = \theta h \left( \frac{x}{4 \theta^2} \right)$ and $h'_\theta(x) = \frac{1}{4 \theta} h' \left( \frac{x}{4 \theta^2} \right)$, with
    \[
        h(y) = 2 \ln(2 \cosh( \sqrt{y} )), \quad \text{and} \quad
        h'(y) = \dfrac{\tanh(\sqrt{y})}{\sqrt{y}}.
    \]
    In particular, $h$ (hence $h_\theta$) is positive, increasing and concave. We have $\lim_{y \to 0} h'(y) = 1$, and the inequality $h_\theta(x) \ge \sqrt{x}$, valid for all $\theta > 0$ and all $x \ge 0$. In addition, we have the pointwise convergence $h_\theta(x) \to \sqrt{x}$ as $\theta \to 0$.
\end{proposition}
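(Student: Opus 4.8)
The plan is to reduce everything to qualitative properties of the single function $h$ via the two scaling identities, and to read off concavity, monotonicity and the limit $h'(0)=1$ from a convenient integral representation of $h'$.

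First I would verify the scaling relations. Since $\sqrt{x/(4\theta^2)}=\sqrt x/(2\theta)$, we get $\theta\,h\!\left(\tfrac{x}{4\theta^2}\right)=2\theta\ln\!\left(2\cosh\tfrac{\sqrt x}{2\theta}\right)=h_\theta(x)$, and the chain rule gives $h_\theta'(x)=\theta\cdot\tfrac{1}{4\theta^2}\,h'\!\left(\tfrac{x}{4\theta^2}\right)=\tfrac{1}{4\theta}h'\!\left(\tfrac{x}{4\theta^2}\right)$. Hence it is enough to prove all the stated qualitative facts for $h$ itself, and then transfer them. Differentiating $h(y)=2\ln 2+2\ln\cosh\sqrt y$ gives $h'(y)=2\cdot\tfrac{\sinh\sqrt y}{\cosh\sqrt y}\cdot\tfrac{1}{2\sqrt y}=\tfrac{\tanh\sqrt y}{\sqrt y}$, which is the claimed formula.

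The key step is the representation $\tanh u=\int_0^u \cosh^{-2}s\,\rd s$, so that $\tfrac{\tanh u}{u}=\int_0^1 \cosh^{-2}(tu)\,\rd t$ and therefore $h'(y)=\int_0^1 \cosh^{-2}(t\sqrt y)\,\rd t$. From this single identity: $0<h'(y)\le 1$, so $h$ (hence $h_\theta$) is positive-derivative, i.e.\ increasing; $\lim_{y\to0}h'(y)=\int_0^1 1\,\rd t=1$; and since $s\mapsto \cosh^{-2}s$ is strictly decreasing on $[0,\infty)$ and $y\mapsto\sqrt y$ is increasing, for each fixed $t\in(0,1]$ the integrand $\cosh^{-2}(t\sqrt y)$ is strictly decreasing in $y$, whence $h'$ is strictly decreasing and $h$ (hence $h_\theta$) is strictly concave. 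Positivity of $h$ itself is immediate from $2\cosh\sqrt y\ge 2$, giving $h(y)\ge 2\ln 2>0$.

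Finally, for the inequality the scaling identity reduces $h_\theta(x)\ge\sqrt x$ to $h(y)\ge 2\sqrt y$ for all $y\ge 0$ (take $y=x/(4\theta^2)$, so $\sqrt x=2\theta\sqrt y$), and this follows from $2\cosh u=\re^{u}+\re^{-u}\ge\re^{u}$, hence $h(y)=2\ln(2\cosh\sqrt y)\ge 2\ln\re^{\sqrt y}=2\sqrt y$. For the pointwise limit, fix $x>0$ and set $u=\sqrt x/(2\theta)\to+\infty$ as $\theta\to0$; writing $\ln(2\cosh u)=u+\ln(1+\re^{-2u})$ gives $h_\theta(x)=2\theta u+2\theta\ln(1+\re^{-2u})=\sqrt x+O(\theta)\to\sqrt x$, while for $x=0$ one has $h_\theta(0)=2\theta\ln 2\to 0=\sqrt 0$. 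None of these steps is a genuine obstacle; the only one requiring a small idea is the concavity, and the integral representation of $u\mapsto\tanh(u)/u$ is exactly what makes it painless, avoiding an explicit computation of $h''$.
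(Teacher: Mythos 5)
Your proof is correct and follows essentially the same route as the paper: direct verification of the scaling identities, concavity via the fact that $h'$ is positive and decreasing, the bound $h_\theta(x)\ge\sqrt{x}$ from $2\cosh u\ge \re^{u}$, and elementary limits for $h'(0)=1$ and $h_\theta(x)\to\sqrt{x}$. The only embellishment is your integral representation $h'(y)=\int_0^1\cosh^{-2}(t\sqrt{y})\,\rd t$, which is a clean way to justify the monotonicity of $h'$ that the paper simply asserts (the paper also notes an alternative concavity argument, writing $h_\theta$ as a minimum of linear functions).
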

The last part shows that we recover the model at zero temperature. The concavity of $h$ comes from the fact that $h'$ is positive and decreasing. Another way to see concavity is that $h_\theta(t) = \min_{0 \le g \le 1} 2 \{ t g + \theta S(g) \}$ is the minimum of linear functions (in $t$), hence concave. The inequality $h_\theta(x) \ge \sqrt{x}$ comes from $2 \cosh(x) \ge \re^x$. 

\subsection{Proof of Theorem~\ref{th:main}: Existence of a critical temperature}
\label{Critical_T}

We now study the minimizers of $\cF^{(L)}_\theta(\{ \bt\})$ in~\eqref{eq:free_energy}, which we recall is given by
\[
   \cF^{(L)}_\theta(\{ \bt\}) :=  \frac{\mu}{2} \sum_{i=1}^L (t_i - 1)^2 - \Tr \left( h_\theta(T^2)  \right).
\] 
First, we prove that the minimizers are always $2$-periodic. We then study the existence of a critical temperature. For the first part, our strategy follows closely the argument of Kennedy and Lieb in~\cite{kennedy2004proof}, and relies on the concavity of $h_\theta$. 

\medskip

\underline{All minimizers are $2$-periodic}. Recall that if $x \mapsto \varphi(x)$ is concave over $\R_+$, then $A \mapsto \Tr (\varphi(A))$ is concave over the set of positive matrices. Applying this property to $h_\theta$ which is concave on $\R_+$, we have
\[
   \Tr( h_\theta(T^2)) \le \Tr(h_\theta ( \bra T^2 \ket )),
\]
where $\bra T^2 \ket$ is defined as in \cite{kennedy2004proof} as the average of $T^2$ over all translations:
\begin{equation*}
    \bra T^2 \ket = \frac{1}{L}\sum_{k=1}^L \Theta_k T^2 \Theta_k^{-1}, \mbox{ with } \Theta_k = \Theta_1^k \mbox{ and } \Theta_1:= \begin{pmatrix}
0 & 1 & 0 & \cdots  & 0 \\
0 & 0 & 1 & \cdots  & 0 \\
\vdots & \vdots & \vdots & \ddots & \vdots \\
0 & 0 & 0 & \cdots &  1 \\
1 & 0 & 0 & \cdots & 0 
\end{pmatrix}.
\end{equation*}
This implies the lower bound 
\begin{equation} \label{eq:lower_bound_FG}
    F^{(L)}_\theta \ge G_\theta^{(L)}
\end{equation}
where
\begin{equation} \label{MOY_ENERGY}
     G^{(L)}_\theta := \inf \left\{ \cG^{(L)}_\theta( \{\bt\} ), \ \bt \in \R^L_+  \right\},
     \quad \text{with} \quad
    \cG_\theta^{(L)}(\{\bt\}) = \frac{\mu}{2} \sum_{i=1}^L (t_i - 1)^2 - \Tr \left( h_\theta(\bra T^2 \ket )  \right).
\end{equation}
In addition, we have equality in~\eqref{eq:lower_bound_FG} iff the optimal $\{\bt\}$ for $G_\theta^{(L)}$ satisfies $T(\{ \bt \})^2 = \bra T(\{ \bt \})^2 \ket$. Note that
\[
    T^2 = \begin{pmatrix}
        t_L^2 + t_1^2   & 0    & t_1 t_2      & 0         & \cdots   & 0\\
        0    & t_1^2 + t_2^2  & 0    & t_2 t_3 & \cdots         & t_L t_1       \\
        t_1 t_2      & 0    & t_2^2 + t_3^2      & 0       & \cdots   & 0        \\
        \vdots & \vdots & \vdots & \ddots    & \vdots   & \vdots    \\
        t_{L-1} t_L      & 0      & \cdots &  0 & t_{L-2}^2 + t_{L-1}^2        & 0   \\
        0 & t_L t_1      & \cdots & t_{L-1} t_L         & 0 & t_{L}^2 + t_1^2
    \end{pmatrix}.
\]
So we have $T(\{ \bt \})^2 = \bra T(\{ \bt \})^2 \ket$ iff $t_i^2 + t_{i+1}^2$ and $t_i t_{i+1}$ are independent of $i$. This happens only if $T$ is $2$-periodic.

Introducing the variables (our notation slightly differ from the ones in~\cite{kennedy2004proof}: we put $z^2$ instead of $z$, so that all quantities $(x,y,z)$ are homogeneous)
\begin{equation*} 
    x := \frac1L \sum_{i=1}^L t_i, \quad y^2 := \frac1L \sum_{i=1}^L t_i^2, \quad z^2 = \frac{1}{L} \sum_{i=1}^L t_i t_{i+1},
\end{equation*}
we obtain $\bra T^2 \ket = 2 y^2 \bbI_L +  z^2 \Omega_L$ with $\Omega_L := \Theta_2 + \Theta_2^*$, and
\begin{equation*} 
 \cG^{(L)}_\theta(\{\bt\}) =  \widetilde{\cG}^{(L)}_\theta(x,y,z) :=  \frac{\mu L}{2}(y^2 - 2x + 1) - \Tr \left( h_\theta(2y^2 \bbI_L + z^2 \Omega_L) \right) .
\end{equation*}
The function $\widetilde{\cG}^{(L)}_\theta$ is much easier to study, as it only depends on the three variables $(x,y,z)$. Let us identify the triplets $(x,y,z)$ coming from a $2$-periodic or $1$-periodic state.

\begin{lemma} ~
    \begin{itemize}
        \item For all $\bt \in \R^L_+$, the corresponding triplet $(x,y,z)$ belongs to 
        \[
        X := \left\{ (x, y, z) \in \R^3_+, \quad  y^2 \ge x^2, \quad z^2 \ge \max \{ 0, 2 x^2 - y^2 \} \right\}.
        \]
        \item If $L = 2N$ is even, the configuration $\bt$ is $2$-periodic of the form~\eqref{CH1_8} iff the triple $(x,y,z)$ belongs to
        \[
            X_2 := \left\{ (x, y, z) \in \R^3_+ \ \text{of the form} \ x = W, \ y^2 = W^2 + \delta^2, \ z^2 = W^2 - \delta^2 \right\}.
        \]
        This happens iff $z^2 = 2x^2 - y^2$.
        \item The configuration $\bt$ is $1$-periodic, of the form $\bt = (W, \cdots, W)$ iff $(x,y,z)$ belongs to
        \[
            X_1 :=  \left\{ (x, y, z) \in \R^3_+ \ \text{of the form} \ x = y = z = W \right\}.
        \]
        This happens iff $z^2 = 2x^2 - y^2$ and $x = y$.
    \end{itemize}
    
\end{lemma}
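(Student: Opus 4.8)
The plan is to verify each of the three bullets by a direct computation, working from the definitions of $x$, $y^2$, $z^2$ in terms of $\bt\in\R^L_+$.

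\textbf{First bullet (the set $X$).} I would start from the elementary inequalities. The bound $y^2\ge x^2$ is just Cauchy--Schwarz (or the convexity of $s\mapsto s^2$) applied to the average: $\frac1L\sum t_i^2\ge\bigl(\frac1L\sum t_i\bigr)^2$. The bound $z^2\ge 0$ is obvious since all $t_i>0$. For the remaining bound $z^2\ge 2x^2-y^2$ I would write
\[
    z^2 = \frac1L\sum_{i=1}^L t_i t_{i+1} = \frac1L\sum_{i=1}^L\Bigl[\tfrac12 t_i^2 + \tfrac12 t_{i+1}^2 - \tfrac12(t_i-t_{i+1})^2\Bigr] = y^2 - \frac{1}{2L}\sum_{i=1}^L (t_i-t_{i+1})^2,
\]
so $z^2\le y^2$; combined with $y^2\ge x^2$ this does not immediately give $z^2\ge 2x^2-y^2$, so instead I would use the alternative identity $t_it_{i+1}=\tfrac12(t_i+t_{i+1})^2-\tfrac12 t_i^2-\tfrac12 t_{i+1}^2$, giving $z^2=\frac1{2L}\sum(t_i+t_{i+1})^2 - y^2$, and then apply Cauchy--Schwarz to the $(t_i+t_{i+1})$: $\frac1L\sum(t_i+t_{i+1})^2\ge\bigl(\frac1L\sum(t_i+t_{i+1})\bigr)^2=(2x)^2=4x^2$. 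Hence $z^2\ge 2x^2-y^2$, which is the claim. (That these are the only constraints—that every point of $X$ is realized—is not asserted in the statement, so I need not prove surjectivity.)

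\textbf{Second and third bullets ($X_2$ and $X_1$).} For $X_2$: if $t_i=W+(-1)^i\delta$ then $x=\frac1L\sum t_i=W$ since the $(-1)^i$ sum cancels ($L$ even), $y^2=\frac1L\sum (W+(-1)^i\delta)^2=W^2+\delta^2$ (cross term cancels), and $z^2=\frac1L\sum t_it_{i+1}=\frac1L\sum(W+(-1)^i\delta)(W-(-1)^i\delta)=W^2-\delta^2$. This gives the stated values and shows $z^2=2x^2-y^2$ automatically. Conversely, $z^2=2x^2-y^2$ forces equality in the Cauchy--Schwarz step above for the sequence $(t_i+t_{i+1})_i$, i.e.\ $t_i+t_{i+1}$ is constant $=2W$ where $W=x$; setting $\delta_i:=t_i-W$ this means $\delta_{i+1}=-\delta_i$, so $\delta_i=(-1)^i\delta_0$ on the even cycle, which is exactly the dimerized form~\eqref{CH1_8}. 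For $X_1$: the $1$-periodic case $\bt=(W,\dots,W)$ is the special case $\delta=0$, so $x=y=z=W$; conversely $x=y$ forces equality in Cauchy--Schwarz for $(t_i)_i$, hence all $t_i$ equal, and then $z=x=y$ as well.

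\textbf{Main obstacle.} There is no serious obstacle here; this is a bookkeeping lemma. The one point requiring care is choosing the \emph{right} algebraic identity for $t_it_{i+1}$ so that the Cauchy--Schwarz inequality lands on $2x^2-y^2$ (the ``sum'' identity rather than the ``difference'' one), and tracking the equality cases of Cauchy--Schwarz carefully on the cyclic index set—in particular noting that $\delta_{i+1}=-\delta_i$ around an even cycle is consistent (it would not be around an odd cycle, which is the source of the parity restriction and is consistent with the discussion of odd chains in the paper). I would present the three bullets in sequence, reusing the identity $z^2=\frac1{2L}\sum_i(t_i+t_{i+1})^2-y^2$ throughout.
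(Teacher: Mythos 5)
Your proof is correct and follows essentially the same route as the paper: the identity $z^2 = \frac{1}{2L}\sum_i (t_i+t_{i+1})^2 - y^2$ combined with Cauchy--Schwarz applied to $(t_i)$ and to $(t_i+t_{i+1})$ is exactly the paper's argument for the first bullet. The paper dismisses the remaining bullets as easily checked; your explicit treatment of the Cauchy--Schwarz equality cases (constancy of $t_i+t_{i+1}$, the even-cycle consistency of $\delta_{i+1}=-\delta_i$) correctly fills in those details.
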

\begin{proof}
    By Cauchy-Schwarz, we have
    \[
        x^2 = \frac{1}{L^2} \left( \sum_{i=1}^L t_i \right)^2 \le \frac{1}{L} \sum_{i=1}^L t_i^2 = y^2,
    \]
    which is the first equality. Next, we have, 
    \[
        z^2 = \frac{1}{2 L} \sum_{i=1}^L \left[ (t_i + t_{i+1})^2 - t_i^2 - t_{i+1}^2 \right] =  \frac{1}{2 L} \sum_{i=1}^L (t_i + t_{i+1})^2 - y^2.
    \]
    On the other hand, we have by Cauchy-Schwarz,
    \[
        x^2 = \left( \frac{1}{2 L} \sum_{i=1}^L (t_i + t_{i+1}) \right)^2 \le \frac{1}{4 L} \sum_{i=1}^L (t_i + t_{i+1})^2.
    \]
    This proves that $z^2 \ge 2 x^2 - y^2$. The other parts of the Lemma can be easily checked.
\end{proof}

\begin{lemma}\label{mini_Gover_X}
    For any integer $L > 2$ and all $\theta \ge 0$, the minimizers of $\widetilde{\cG}^{(L)}_\theta$ over $X$ belong to $X_2$.
\end{lemma}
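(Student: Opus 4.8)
The plan is to minimize $\widetilde{\cG}^{(L)}_\theta$ over the region $X$ by a two-stage argument: first fix $x$ and $y$ and minimize in $z$, then argue that the optimal $z$ lies on the boundary $z^2 = 2x^2 - y^2$, which is exactly the set $X_2$. The key observation is that $z$ enters $\widetilde{\cG}^{(L)}_\theta$ only through the term $-\Tr\bigl(h_\theta(2y^2 \bbI_L + z^2 \Omega_L)\bigr)$, and since $h_\theta$ is concave on $\R_+$ (Proposition~\ref{prop:htheta}), the map $z^2 \mapsto \Tr\bigl(h_\theta(2y^2\bbI_L + z^2\Omega_L)\bigr)$ is concave in the scalar variable $z^2$ (it is a finite sum of concave functions $z^2 \mapsto h_\theta(2y^2 + z^2 \lambda_j)$, where $\lambda_j \in [-2,2]$ are the eigenvalues of $\Omega_L$). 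Hence $z^2 \mapsto -\Tr(h_\theta(\cdots))$ is \emph{convex}, so on the interval $z^2 \in [\max\{0, 2x^2-y^2\}, +\infty)$ — wait, $z$ is actually unbounded above in $X$, so I need to be slightly more careful: the full functional also has no direct $z$-penalty, so I should check the behavior as $z \to \infty$.

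More precisely, for fixed $(x,y)$ with $y^2 \ge x^2$, consider $\psi(u) := \mu L (y^2 - 2x+1)/2 - \Tr(h_\theta(2y^2\bbI_L + u\,\Omega_L))$ as a function of $u = z^2 \ge \max\{0, 2x^2 - y^2\}$. By the argument above $\psi$ is convex in $u$. A convex function on a half-line $[u_0, \infty)$ attains its minimum either at $u_0$ or is non-increasing (going to $-\infty$ or a finite infimum at $+\infty$); I must rule out the latter. Since $h_\theta(x) \le C + \sqrt{x}$-type growth is sublinear (indeed $h_\theta'$ is bounded and $h_\theta'(x) = \frac{1}{4\theta}h'(x/4\theta^2) \to 0$ as $x \to \infty$), $\Tr(h_\theta(2y^2\bbI + u\Omega_L))$ grows sublinearly in $u$; but more to the point, because $\Omega_L = \Theta_2 + \Theta_2^*$ has both positive and negative eigenvalues, I can show that as $u \to \infty$ the trace grows like $\mathrm{const}\cdot\sqrt{u}\cdot\sum_j|\lambda_j|$, which still tends to $+\infty$, so $\psi(u) \to +\infty$. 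Wait — that would make $-\Tr \to -\infty$, contradicting convexity of $\psi$ unless the growth is genuinely sublinear, which it is ($\sqrt{u} = o(u)$). So a convex function that tends to $+\infty$ at the right end and is finite at $u_0$ attains its min at an interior point or at $u_0$. Hmm, that is not yet enough; I actually need the minimum to be exactly at $u_0$.

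The cleaner route, and the one I expect to be the main obstacle to get exactly right: show $\psi$ is \emph{strictly convex and increasing} on $[u_0, \infty)$ so that the minimum is forced to $u = u_0 = 2x^2 - y^2$ (the case $2x^2 - y^2 \le 0$, i.e. $y^2 \ge 2x^2$, must be handled separately — there one would need an additional argument, perhaps that such $(x,y)$ cannot be optimal because one can decrease $y$). Strict convexity follows from strict concavity of $h_\theta$ (true since $h'$ is strictly decreasing), provided $\Omega_L$ is not a multiple of the identity, which holds for $L > 2$. Monotonicity: $\psi'(u) = -\frac{1}{L}\sum_j \lambda_j h_\theta'(2y^2 + u\lambda_j)$; I would pair eigenvalues $\pm\lambda$ of $\Omega_L$ (its spectrum is symmetric, being $2\cos(2\pi k/L)$) and use that $h_\theta'$ is decreasing to get $-\lambda h_\theta'(2y^2+u\lambda) - (-\lambda)h_\theta'(2y^2 - u\lambda) = \lambda\bigl(h_\theta'(2y^2-u\lambda) - h_\theta'(2y^2+u\lambda)\bigr) > 0$ for $\lambda > 0$, so $\psi' > 0$. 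Therefore the infimum over the allowed $z$ is attained at the smallest value $z^2 = \max\{0, 2x^2-y^2\}$, and when this equals $2x^2 - y^2$ that is precisely $X_2$; combined with a separate check that minimizers cannot have $y^2 > 2x^2$, this completes the proof. The delicate points are (a) the boundary case $y^2 \ge 2x^2$ and (b) confirming the eigenvalue-pairing of $\Omega_L$ is exact enough that the monotonicity inequality is strict for at least one pair.
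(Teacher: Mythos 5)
Your overall strategy (fix $x,y$, use concavity of $h_\theta$ to reduce the $z$-minimization to the boundary $z^2=\max\{0,2x^2-y^2\}$) is the same as the paper's, but two steps do not hold as written. First, your monotonicity argument for $\psi(u)$ rests on pairing eigenvalues $\pm\lambda$ of $\Omega_L$, i.e.\ on the spectrum of $\Omega_L=\Theta_2+\Theta_2^*$ being symmetric about $0$; this is false in general. The eigenvalues are $2\cos(4\pi k/L)$ (not $2\cos(2\pi k/L)$), and for $L\equiv 2 \bmod 4$ the spectrum is not symmetric: for $L=6$ it is $\{2,2,-1,-1,-1,-1\}$. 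So the pairing is unavailable precisely in the delicate case $L\equiv 2\bmod 4$ (and for odd $L$, also covered by the lemma). The correct and much shorter route, which is the paper's, uses only $\Tr(\Omega_L)=0$ (zero diagonal): $\varphi(u):=\Tr\bigl(h_\theta(2y^2\bbI_L+u\,\Omega_L)\bigr)$ is concave in $u$ and $\varphi'(0)=h_\theta'(2y^2)\,\Tr(\Omega_L)=0$, hence $\varphi'\le 0$ on $\R_+$, i.e.\ the energy $-\varphi$ is non-decreasing in $u$ and the optimal $z^2$ saturates the lower bound. Equivalently, since $h_\theta'$ is decreasing, $\sum_j\lambda_j h_\theta'(2y^2+u\lambda_j)\le h_\theta'(2y^2)\sum_j\lambda_j=0$ with no symmetry needed. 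This also makes your detours about the $u\to\infty$ behaviour and strict convexity unnecessary.

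Second, the case $2x^2-y^2<0$, which you explicitly leave as ``a separate check,'' is not a side remark but an essential half of the proof: without it you only get $z^2=0$, which need not lie in $X_2$. The paper closes it as follows: if the optimizer had $2x^2-y^2<0$, then by the previous step $z=0$, and
\[
\widetilde{\cG}^{(L)}_\theta(x,y,0)=L\Bigl(\tfrac{\mu}{2}(y^2-2x+1)-h_\theta(2y^2)\Bigr)
\]
is strictly decreasing in $x$, so the optimal $x$ saturates the constraint $x^2\le y^2$, i.e.\ $x=y$; but then $2x^2-y^2=y^2\ge 0$, a contradiction. Hence the minimizer satisfies $2x^2-y^2\ge 0$ and $z^2=2x^2-y^2$, i.e.\ it belongs to $X_2$. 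As your write-up stands, both the monotonicity step (for $L\equiv 2\bmod 4$) and this boundary case are genuine gaps.
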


\begin{proof}
    Let us fix $x$ and $y$, and look at the minimization over the variable $z$ only. Setting $Z := z^2$, we see that
    \[
        \varphi: Z \mapsto \Tr \left( h_\theta (2 y^2 \bbI_L + Z \Omega_L)  \right)
    \]
    is concave. In addition, the derivative of $\varphi$ at $Z = 0$ equals
    \[
        \varphi'(Z) =  \Tr \left( h_\theta' (2 y^2 ) \Omega_L  \right) = h_\theta' (2 y^2 ) \Tr ( \Omega_L ) = 0,
    \]
    where we used that $\Omega_L$ only has null elements on its diagonal. We deduce that $\varphi$ is decreasing on $\R_+$. So the minimizer of $\widetilde{\cG}^{(L)}_\theta$ must saturate the lower bound constraint $z^2 = \max \{ 0, 2x^2 - y^2\}$. \\
    
    We now claim that the optimal triplet $(x,y,z)$ satisfies $2x^2 - y^2 \ge 0$. Assume otherwise that $2x^2 - y^2 < 0$, hence $z^2 = 0$. We have
    \[
        \widetilde{\cG}^{(L)}_\theta(x,y,0) = \frac{\mu L}{2}(y^2 - 2x + 1) - \Tr \left( h_\theta (2y^2) \right) = L \left(  \frac{\mu}{2} (y^2 - 2x + 1) - h_\theta(2 y^2) \right).
    \]
    This function is decreasing in $x$, so the optimal $x$ saturates the constraint $x^2 = y^2$. But in this case, we have $2x^2 - y^2 = y^2 \ge 0 $, a contradiction. This proves that, for the optimizer, we have $2 x^2 - y^2 \ge 0$, and $z^2 = 2x^2 - y^2$. Finally, $(x,y,z)$ belongs in $X_2$.
\end{proof}
Let $(x_*, y_*, z_*) \in X_2$ be the minimizer of $\widetilde{\cG}_\theta^{(L)}$, and let $W \ge 0$ and $\delta \ge 0$ be so that $x_* = W$, $y_*^2 = W^2 + \delta^2$, and $z_* = W^2 - \delta^2$. Let $\bt_*$ be one of the the two $2$-periodic states $W \pm (-1)^i \delta$. We have $T(\{ \bt_* \})^2 = \bra T(\{ \bt_* \})^2 \ket$, which leads to the chain of inequalities
\[
    F_\theta^{(L)} \ge G_\theta^{(L)} \ge \min_{(x, y,z)} \widetilde{\cG}_\theta^{(L)} = \widetilde{\cG}_\theta^{(L)}(x_*, y_*, z_*) = G_\theta^{(L)}(\{ \bt_* \}) = \cF_\theta( \{ \bt_* \}) \ge F_\theta^{(L)}.
\]
We therefore have equalities everywhere. Since only the $2$-periodic states $W \pm (-1)^i \delta$ gives the optimal triplet $(x_*, y_*, z_*)$, they are the only minimizers. This proves that all minimizer of $\cF_\theta^{(L)}$ are $2$-periodic. They are two dimerized minimizers if $\delta > 0$, and a unique $1$-periodic minimizer if $\delta = 0$.

\begin{remark}\label{odd}
In the case of odd chains, we still have the equation $F_\theta^{(L)}  \ge G_\theta^{(L)}$ in~\eqref{eq:lower_bound_FG}. However, the optimal triplet $(x_*, y_*, z_*)$ does not usually come from a state $\{ \bt_* \}$: an odd chain cannot be dimerized. It can however come from such a state if $\delta = 0$, that is if $\bt_*$ is actually one-periodic. One can therefore prove that also for odd chains, minimizers become $1$-periodic for large enough temperature.
\end{remark}

\medskip


\underline{Existence of the critical temperature.}
Since all minimizers are $2$-periodic, we can parametrize $\cG_\theta^{(L)}$ as a function of $(W, \delta)$ instead of $\{ \bt \}$. So we write (in what follows, we normalize by $L$ to get the energy per atom)
\[
    g^{(L)}_\theta(W, \delta) = \frac{\mu }{2} \left[ (W- 1)^2 + \delta^2 \right] - \frac{1}{L} \Tr \left( h_\theta ( 2 (W^2 + \delta^2) \bbI_L  + (W^2 - \delta^2) \Omega_L )  \right).
\]
To compute the last trace, we compute the spectrum of $\Omega_L$. We have, for all $1 \le k \le L$,
\[
    \Omega_L \be_k = 2 \cos \left(  \frac{4 k \pi}{L} \right) \be_k, \quad \text{where} \quad \be_ k =  (1, \re^{2i \pi k/L}, \re^{2 \cdot 2i \pi k /L}, \cdots, \re^{(L-1) \cdot 2i \pi k /L})^T.
\]
So 
\[
    \sigma \left(  \Omega_L \right) := \left\{ 2 \cos \left( \frac{4 k \pi}{L} \right) , \quad 1 \le k \le L \right\}.
\]
This shows that
\begin{align}
   g^{(L)}_\theta(W, \delta) & = \frac{\mu }{2} \left[ (W- 1)^2 + \delta^2 \right] -  \frac{1}{L}\sum_{k=1}^L h_\theta  \left( 2 (W^2 + \delta^2)  +2 (W^2 - \delta^2) \cos \left( \frac{4 k \pi}{L} \right) \right) \nonumber \\
    & = \frac{\mu }{2} \left[ (W- 1)^2 + \delta^2 \right] -  \frac{1}{L} \sum_{k=1}^L h_\theta  \left( 4 W^2 \cos^2 \left( \frac{2 k \pi}{L} \right) + 4 \delta^2 \sin^2 \left(  \frac{2 k \pi}{L}  \right) \right), \label{eq:gthetaL}
\end{align}
which is the expression given in~\eqref{eq:def:gthetaL}. The function $g_\theta$ appearing in Lemma~\ref{lem:justification_thermo} has a similar expression, but we replace the last Riemann sum by the corresponding integral.

First, we prove that for $\theta$ large enough, the minimizer is $1$-periodic (corresponding to $\delta = 0$).
\begin{lemma} \label{lem:existence_thetac_G}
    For all $\theta \ge \frac{1}{\mu}$, the minimizer of $\cG_\theta^{(L)}$ satisfies $\delta = 0$. The same holds for the function $g_\theta$ (thermodynamic limit case).
\end{lemma}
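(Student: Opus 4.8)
The idea is to show that the $-\Tr(h_\theta(\cdots))$ term is too ``flat'' in $\delta$ to beat the quadratic cost $\frac{\mu}{2}\delta^2$ once $\theta\ge 1/\mu$; concretely, that for every fixed $W\ge 0$ the function $\delta\mapsto g_\theta^{(L)}(W,\delta)$ is strictly increasing on $[0,\infty)$. Since $g_\theta^{(L)}$ is continuous and coercive (by Proposition~\ref{prop:htheta}, $h_\theta(x)\le\sqrt{x}+2\theta\ln 2$ grows sublinearly, while $\frac{\mu}{2}[(W-1)^2+\delta^2]$ dominates), a minimizer exists, and it must then satisfy $\delta=0$.

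First I would set $D:=\delta^2$ and differentiate~\eqref{eq:gthetaL} in $D$. Since $h_\theta\in C^1([0,\infty))$ with $h_\theta'(0)=\tfrac1{4\theta}$ (Proposition~\ref{prop:htheta}), differentiating under the finite sum is legitimate and gives
\[
\partial_D\, g_\theta^{(L)}(W,\sqrt D)=\frac{\mu}{2}-\frac{1}{L}\sum_{k=1}^L 4\sin^2\!\Big(\tfrac{2k\pi}{L}\Big)\,h_\theta'\!\Big(4W^2\cos^2\tfrac{2k\pi}{L}+4D\sin^2\tfrac{2k\pi}{L}\Big).
\]
Then I would invoke the bound $h_\theta'(x)=\tfrac{1}{4\theta}h'\!\big(\tfrac{x}{4\theta^2}\big)\le\tfrac{1}{4\theta}$, which follows from $h'(y)=\tanh(\sqrt y)/\sqrt y\le 1$ with equality only at $y=0$. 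Together with the elementary identity $\frac1L\sum_{k=1}^L\sin^2(2k\pi/L)=\tfrac12$ for $L>2$ (write $\sin^2=\tfrac{1-\cos(4k\pi/L)}2$ and sum the geometric series $\sum_k \re^{4\ri k\pi/L}=0$), this yields
\[
\partial_D\, g_\theta^{(L)}(W,\sqrt D)\ \ge\ \frac{\mu}{2}-\frac{1}{2\theta}\ \ge\ 0\qquad\text{whenever }\theta\ge\frac1\mu .
\]

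To get strict monotonicity I would note that $h_\theta'(x)<\tfrac1{4\theta}$ as soon as $x>0$; for $L>2$ there is at least one index $k$ with $\sin(2k\pi/L)\neq 0$, so for such $k$ the argument $4D\sin^2(2k\pi/L)$ is strictly positive once $D>0$, whence that term in the sum is strictly smaller than $\tfrac1{4\theta}\cdot 4\sin^2(2k\pi/L)$. Hence $\partial_D\,g_\theta^{(L)}(W,\sqrt D)>0$ for all $D>0$, so $D\mapsto g_\theta^{(L)}(W,\sqrt D)$ is strictly increasing on $[0,\infty)$, its minimum is attained only at $D=0$, and therefore every minimizer of $g_\theta^{(L)}$ has $\delta=0$. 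The thermodynamic statement for $g_\theta$ is proved verbatim, replacing the Riemann sum by $\frac1{2\pi}\int_0^{2\pi}(\cdots)\,\rd s$ and using $\frac1{2\pi}\int_0^{2\pi}\sin^2 s\,\rd s=\tfrac12$.

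There is no genuine obstacle here; the only points needing care are the justification of differentiating under the sum/integral at $D=0$ (i.e.\ the $C^1$ regularity of $h_\theta$ up to the boundary), and the strictness argument, which is exactly what upgrades ``$\delta=0$ realizes the minimum'' to ``$\delta=0$ at \emph{every} minimizer'' — and the trigonometric identity, which is where the hypothesis $L>2$ enters.
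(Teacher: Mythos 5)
Your proof is correct, and it takes a genuinely different (and arguably more elementary) route than the paper. The paper anchors the argument at the $1$-periodic minimizer $(W_1,0)$: it writes the Euler--Lagrange equation $\partial_W g_\theta(W_1,0)=0$, then compares $g_\theta(W_1+\varepsilon,\delta)$ with $g_\theta(W_1,0)$ using the concavity bound $h(a+b)-h(a)\le h'(a)b$ together with $h'\le h'(0)=1$, arriving at $g_\theta(W_1+\varepsilon,\delta)-g_\theta(W_1,0)\ge\big(\tfrac{\mu}{2}-\tfrac{1}{2\theta}\big)(\varepsilon^2+\delta^2)$. You instead freeze $W$ and prove monotonicity of $\delta^2\mapsto g_\theta^{(L)}(W,\delta)$ via $h_\theta'\le\tfrac1{4\theta}$ and the identity $\tfrac1L\sum_k\sin^2(2k\pi/L)=\tfrac12$ (valid since $L=2N\ge4$); the analytic ingredient is the same bound on $h'$, but your slicing avoids the Euler--Lagrange equation and the $\varepsilon$-expansion entirely. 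Each approach buys something: the paper's two-variable comparison also pins down the minimizer as $(W_1,0)$ in both coordinates, but as written it is strict only for $\theta>\tfrac1\mu$ (at $\theta=\tfrac1\mu$ it merely shows $(W_1,0)$ is \emph{a} minimizer), whereas your strictness argument ($h_\theta'(x)<\tfrac1{4\theta}$ for $x>0$, and some $k$ with $\sin(2k\pi/L)\neq0$) cleanly covers the borderline case $\theta=\tfrac1\mu$ claimed in the statement; on the other hand your argument needs the (easy, and correctly supplied) existence of a minimizer via coercivity, which the paper's comparison does not require beyond the one-variable problem defining $W_1$. Your remarks on differentiability of $h_\theta$ at $0$ and on where $L>2$ enters are exactly the right points of care, and the verbatim transfer to the thermodynamic functional $g_\theta$ is valid.
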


\begin{proof}
  We prove the result in the thermodynamic limit, but the proof works similarly at fixed $L$. Let $(W_1, 0)$ denote the minimizer of $g_\theta$ among $1$-periodic configurations (that is with the extra constraint that $\delta = 0$). Writing that $\partial_W g_\theta (W_1, 0)= 0$, we obtain that
\begin{equation}\label{EL}
    \mu (W_1 - 1) =
     \frac{W_1}{ \pi \theta} \int_0^{2 \pi} h'\left( \dfrac{W_1^2 \cos^2(s)}{\theta^2}\right)  \cos^2(s)  \rd s.
\end{equation}
For any other configurations $(W, \delta)$, we write $W = W_1 + \varepsilon$, and obtain that
\begin{align*}
     g_\theta (W_1 + \varepsilon, \delta) - g_\theta(W_1, 0) 
     & = \frac{\mu}{2} \left[ 2(W_1 - 1) \varepsilon + \varepsilon^2 + \delta^2 \right] \\
     & \quad   - \frac{\theta}{2 \pi} \int_0^{2 \pi} \left[ h \left(  \dfrac{ (W_1 + \varepsilon)^2 \cos^2(s) + \delta^2 \sin^2(s)}{\theta^2 }\right) - h \left(  \dfrac{W_1^2 \cos^2 (s)}{\theta^2}  \right) \right] \rd s.
\end{align*}
Using that $h$ is concave, we have $h(a + b) - h(a) \le h'(a)b$, so, with $a = W_1^2 \cos^2(s) / \theta^2$ and $b = \left[ \delta^2 \sin^2(s) + (2W_1\varepsilon + \varepsilon^2) \cos^2( s) \right]/\theta^2$, we get
\begin{align*}
    &g_\theta (W_1 + \varepsilon, \delta) - g_\theta(W_1, 0) \\
    & \ge \mu(W_1 - 1) \varepsilon + \frac{\mu}{2} \varepsilon^2 + \frac{\mu}{2} \delta^2  \\
    & \quad  - \frac{1}{2 \pi \theta} \int_0^{2 \pi}  h'  \left( \dfrac{ W_1^2 \cos^2 (s)}{\theta^2}  \right) \left[ \delta^2 \sin^2(s) + (2W_1\varepsilon + \varepsilon^2) \cos^2( s) \right] \rd s.
\end{align*}
Using~\eqref{EL}, the term linear in $\varepsilon$ vanishes. In addition, since $h'' < 0$ on $\R_+$, we have $h'(x) \le~h'(0)=~1$. This gives
\begin{align*}
    & g_\theta (W_1 + \varepsilon, \delta) - g_\theta(W_1, 0) 
    \ge \left( \frac{\mu}{2} -  \frac1{2 \theta} \right) \varepsilon^2 +  \left( \frac{\mu}{2} -  \frac1{2 \theta} \right)  \delta^2 .
\end{align*}
The right-hand side is positive whenever $\theta >\frac{1}{\mu}$, which proves the result.
\end{proof}

In what follows, we define the critical temperature $\theta_c = \theta_c(\mu)$ by
\[
    \theta_c := \inf \{ \theta \in \R_+, \  \text{the minimizer of $g_{\theta'}$ has $\delta = 0$ for all $\theta' \ge \theta$} \}.
\]
We define similarly $\theta_c^{(L)} = \theta_c^{(L)}(\mu)$ for the case of finite chains.

\medskip

\underline{Study of the critical temperature in the case $L \in 2 \N$.}
We now study $\theta_c^{(L)}$ with $L = 2N$, $N \ge 2$, and prove that it is strictly positive if $L \equiv 0 \mod 4$, and that, if $L \equiv 2 \mod 4$, there is $\mu_c = \mu_c(L)$ so that $\theta_c^{(L)}(\mu) > 0$ iff $\mu < \mu_c(L)$.

For fixed $\theta$, any minimizing configuration $(W, \delta)$ satisfies the Euler-Lagrange equations
\begin{equation*}
    (\partial_W g_\theta^{(L)}, \partial_\delta g_\theta^{(L)})(W, \delta) = (0,0).
\end{equation*}
This gives the set of equations
\begin{equation} \label{ELEQ}
    \begin{cases}
		\mu(W - 1) & =  \displaystyle \frac{2W}{\theta} \frac{1}{L}\sum_{k = 1}^{L} h'\left(\frac{W^2}{\theta^2} \cos^2( \tfrac{2k\pi}{L}) + \frac{\delta^2}{\theta^2} \sin^2( \tfrac{2k\pi}{L}) \right)\cos^2(\tfrac{2k\pi}{L}) \\
		\mu \delta & =  \displaystyle  \frac{2\delta }{\theta } \frac{1}{L}\sum_{k = 1}^{L} h'\left(\frac{W^2}{\theta^2} \cos^2(\tfrac{2k\pi}{L}) + \frac{\delta^2}{\theta^2} \sin^2( \tfrac{2k\pi}{L})\right)\sin^2(\tfrac{2k\pi}{L}).
	\end{cases}
\end{equation}
Note that the second equation always admits the trivial solution $\delta = 0$. This corresponds to the critical point among $1$-periodic configurations. It is the unique solution if $\theta \ge \theta_c^{(L)}$, but for $\theta \in (0, \theta_c^{(L)})$, there are other critical points, corresponding to the dimerized configurations. Actually, as $\theta$ varies, we expect two branches of solutions: the branch of $1$-periodic configuration, and the branch of dimerized configurations. These two branches cross only at $\theta = \theta_c$ (see Figure~\ref{fig:fig} (right)).

\medskip

In order to focus on the branch of dimerized configurations, we factor out the $\delta$ factor in the second equation. Now, $\delta = 0$ is no longer a solution, unless we are exactly at the critical temperature $\theta_c^{(L)}$. So, in order to find this critical temperature, we seek the solution, in $(W, \theta)$, of (we multiply the second equation by $W$ for clarity)
\begin{equation} \label{Eul_Lag__L_fini}
\begin{cases}
		\mu(W - 1) & =  \displaystyle \frac{2W}{\theta} \frac{1}{L}\sum_{k = 1}^{L} h'\left(\frac{W^2}{\theta^2} \cos^2( \tfrac{2k\pi}{L})\right)\cos^2(\tfrac{2k\pi}{L}) \\
		\mu W & =  \displaystyle  \frac{2 W}{\theta } \frac{1}{L}\sum_{k = 1}^{L} h'\left(\frac{W^2}{\theta^2} \cos^2(\tfrac{2k\pi}{L})\right)\sin^2(\tfrac{2k\pi}{L}).
\end{cases}
\end{equation}


\begin{lemma}\label{lem:uniqueness_of_theta_c_L}
For all $\mu > 0$, there is a unique solution $(W, \theta)$ of \eqref{Eul_Lag__L_fini} in the case $L~=~0\mod4$, whereas if $L=2\mod4$, there is some value $\mu_c:=\mu_c(L)$ such that for all $\mu > \mu_c$, \eqref{Eul_Lag__L_fini} has no solution, and has a unique one if $\mu \le \mu_c$. Moreover in the last case $\mu_c(L)\sim \frac{2}{\pi}\ln(L)$ at $+\infty$.
\end{lemma}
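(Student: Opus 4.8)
The plan is to recast the system \eqref{Eul_Lag__L_fini} as a single scalar equation in one variable and reduce the question to the monotonicity of an explicit auxiliary function. First I would add the two equations of \eqref{Eul_Lag__L_fini}: since $\cos^2+\sin^2=1$ and $h'(y)y\big|_{y=W^2\cos^2/\theta^2}$ assembles nicely, the sum collapses. More precisely, adding the first equation to the second gives
\[
    \mu(2W-1) = \frac{2W}{\theta}\,\frac1L\sum_{k=1}^L h'\!\Big(\tfrac{W^2}{\theta^2}\cos^2(\tfrac{2k\pi}{L})\Big),
\]
while dividing the second equation by $W>0$ (note $W=0$ is not a solution since the left side of the first equation would force $\mu=0$) isolates a relation in the single variable $u:=W/\theta$. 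Concretely, introduce $u = W/\theta$; the second equation of \eqref{Eul_Lag__L_fini} becomes
\[
    \mu = \frac{2}{L}\sum_{k=1}^L u\, h'\!\big(u^2\cos^2(\tfrac{2k\pi}{L})\big)\sin^2(\tfrac{2k\pi}{L}) \cdot \frac1{u} \cdot u
\]
— better: rewrite it as $\mu W = \tfrac{2W}{\theta} \Phi_L(u)$ with $\Phi_L(u) := \tfrac1L\sum_k h'(u^2\cos^2(\tfrac{2k\pi}{L}))\sin^2(\tfrac{2k\pi}{L})$, i.e. $\mu\theta = 2\Phi_L(u)$, hence $\mu = 2\Phi_L(u)/(\text{something})$. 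The cleanest route: from the two equations, eliminate $\theta$ by taking the ratio, obtaining an equation $\Psi_L(u)=\mu$ for a single explicit function $\Psi_L$ of $u=W/\theta\in(0,\infty)$, and then recover $\theta$ (hence $W$) uniquely from either equation once $u$ is known. Thus existence/uniqueness of $(W,\theta)$ is equivalent to existence/uniqueness of $u\in(0,\infty)$ solving $\Psi_L(u)=\mu$.

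Next I would analyze $\Psi_L$. Using $h'(y)=\tanh(\sqrt y)/\sqrt y$, which is smooth, positive, strictly decreasing on $[0,\infty)$ with $h'(0)=1$ and $h'(y)\sim 1/\sqrt y$ as $y\to\infty$, I expect $\Psi_L$ to be continuous on $(0,\infty)$ and, crucially, \emph{strictly monotone} — this is the heart of the uniqueness claim. I would compute $\Psi_L'$ and show it has a constant sign; the monotonicity should follow because each summand $u\, h'(u^2 c_k^2)$ (with $c_k=\cos(2k\pi/L)$) is monotone in $u$, the monotonicity being uniform across $k$. Then I would compute the two boundary limits. As $u\to 0$, $h'\to 1$, so $\Psi_L(u)$ tends to a finite limit $\Psi_L(0^+)$, which is an explicit trigonometric average (a ratio of $\tfrac1L\sum\sin^2$-type and $\tfrac1L\sum\cos^2$-type sums); the key point is whether this limit is $0$ or strictly positive, and whether it is $+\infty$ or finite. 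The dichotomy between $L\equiv 0\bmod 4$ and $L\equiv 2\bmod 4$ enters exactly here: when $L\equiv 2\bmod 4$ one of the indices $k$ gives $\cos(2k\pi/L)=0$ (namely $k=L/2$), and the behaviour of the corresponding term differs from the $L\equiv 0\bmod 4$ case, where $\cos(2k\pi/L)$ never vanishes but $\sin(2k\pi/L)$ does vanish at $k=L/2$... wait — in fact for $L\equiv 0\bmod 4$ there is an index with $\cos=0$ too (at $k=L/4$). I would track these special indices carefully: the presence or absence of a term with vanishing cosine changes the large-$u$ asymptotics of $\Psi_L$, and this is what makes $\Psi_L$ surjective onto $(0,\infty)$ in one case but only onto a bounded interval $(\Psi_L(0^+),\Psi_L(\infty))$ in the other, yielding the threshold $\mu_c(L)$.

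Finally, I would extract the asymptotics $\mu_c(L)\sim \tfrac2\pi\ln L$. In the case $L\equiv 2\bmod 4$, $\mu_c(L)$ should be one of the boundary values of $\Psi_L$ (the $u\to\infty$ limit, or the $u\to 0^+$ limit, whichever is the relevant obstruction to solvability — I expect it is the $u\to\infty$ limit). That limit is a Riemann-type sum that diverges logarithmically in $L$: the summand $u\,h'(u^2\cos^2(2k\pi/L))\sin^2(\cdots)$ behaves, after the rescaling that sends $u\to\infty$, like $\sin^2(2k\pi/L)/|\cos(2k\pi/L)|$ near the indices where $\cos$ is small, and $\tfrac1L\sum_k 1/|\cos(2k\pi/L)|$ is $\sim \tfrac1\pi\ln L$ up to constants — the logarithm coming from the integrable-in-the-limit but discretely-divergent singularity $1/|\cos s|$ near $s=\pi/2$, cut off at scale $1/L$. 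Getting the constant $\tfrac2\pi$ right requires care with: (i) which boundary limit of $\Psi_L$ is $\mu_c$, (ii) the precise local model of the sum near the bad index, and (iii) matching the discrete sum to $\int |\cos s|^{-1}ds$ with its logarithmic divergence. I expect this asymptotic computation — pinning down the constant, not merely the $\ln L$ order — to be the main obstacle; the monotonicity of $\Psi_L$, while requiring a short computation, should be routine by comparison.
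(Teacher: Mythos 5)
Your reduction is, once the algebra is straightened out, the same as the paper's: eliminating $\theta$ from \eqref{Eul_Lag__L_fini} (by whatever combination of the two equations) leaves exactly their difference, i.e.\ with $x=W/\theta$ the single scalar equation
\[
\mu \;=\; \frac{2}{L}\sum_{k=1}^{L}\frac{\tanh\bigl(x\,|\cos(\tfrac{2k\pi}{L})|\bigr)}{|\cos(\tfrac{2k\pi}{L})|}\,\Bigl(\sin^2(\tfrac{2k\pi}{L})-\cos^2(\tfrac{2k\pi}{L})\Bigr),
\]
with the convention that a term with vanishing cosine contributes $x$; afterwards $\theta$ and then $W$ are recovered from the original equations, as you say. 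The genuine gap is exactly in the step you declare routine: strict monotonicity in $x$. Each factor $\tanh(x|c_k|)/|c_k|$ is indeed increasing in $x$, but it is multiplied by the weight $\sin^2-\cos^2=-\cos(\tfrac{4k\pi}{L})$, which is negative for roughly half of the indices (those with $|\cos(\tfrac{2k\pi}{L})|>\tfrac1{\sqrt2}$); your function is thus a difference of increasing functions, and there is no ``uniform across $k$'' monotonicity to invoke. The paper has to work for this: for $L\equiv0\bmod4$ it pairs each index carrying a cosine with the one carrying the corresponding sine and uses that $\bigl[\cosh^{-2}(x\sin(\tfrac{\pi s}{2}))-\cosh^{-2}(x\cos(\tfrac{\pi s}{2}))\bigr]\cos(\pi s)\ge0$; for $L\equiv2\bmod4$ that pairing is unavailable and it instead exploits that the negative weights occur precisely where $\cos(\tfrac{k\pi}{2n+1})\le\tfrac1{\sqrt2}$, together with $\sum_{k=1}^{2n}\cos(\tfrac{2k\pi}{2n+1})=-1$, to bound the derivative below by a multiple of $\cosh^{-2}(x/\sqrt2)-\cosh^{-2}(x)\ge0$. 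Without an argument of this kind, uniqueness --- the heart of the lemma --- is unproved.

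Two further points need repair. First, the case analysis is left unresolved and partly backwards: $\cos(\tfrac{2k\pi}{L})=0$ for some $k\in\{1,\dots,L\}$ happens precisely when $L\equiv0\bmod4$ (at $k=L/4$ and $k=3L/4$; not at $k=L/2$), and never when $L\equiv2\bmod4$. Hence it is the case $L\equiv0\bmod4$ whose scalar function contains the linearly diverging contribution (the paper's $x/n$), giving a solution for every $\mu>0$, while for $L\equiv2\bmod4$ the function is bounded and its limit as $x\to+\infty$ is $\mu_c(L)$; committing to this assignment is the entire content of the dichotomy, and your proposal does not. Second, the asymptotics is only gestured at, and your heuristic constant (``$\sim\tfrac1\pi\ln L$ up to constants'') is not the stated one; to pin down $\tfrac2\pi$ one needs something like the paper's decomposition of $\mu_c(L)$ into a Riemann sum of an integrable function plus the explicit sum $\tfrac1\pi\sum_{k=1}^{2n+1}|k-n-\tfrac12|^{-1}\sim\tfrac2\pi\ln L$. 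These two items are fixable along the lines you sketch, but the monotonicity step as you justify it would fail.
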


\begin{proof}
    We write $L = 2N$, and note that the terms $k$ and $k + N$ gives the same contribution. Taking the difference of the second and first equations of \eqref{Eul_Lag__L_fini}, we obtain
	\[
	    \mu  =  - \displaystyle  \frac{2 W}{\theta } \frac{1}{N}\sum_{k = 1}^{N} h'\left(\frac{W^2}{\theta^2} \cos^2(\tfrac{k\pi}{N})\right)\cos(\tfrac{2k\pi}{N}).
	\]
	Recall that $h'(t) = \frac{\tanh(\sqrt{t})}{\sqrt{t}}$ for $t\neq 0$ and $h'(0) = 1$. The point $t = 0$ therefore plays a special role. The argument of $h'$ equals $0$ for $k =\frac{N}{2}$, which happens only if $N \equiv 0 \mod 2$ (that is $L \equiv 0 \mod 4$). In this case, the equation becomes, with $x := \frac{W}{\theta}$ (we write $L = 2N = 4n$)
	\begin{equation}\label{Eq_J_L}
	\mu = -\frac{1}{n}
	  \sum_{\tiny{\begin{matrix} 
			k = 1 \\
			k\neq n
			\end{matrix}} }^{2n}\frac{\tanh\left(x\cos(\frac{k\pi}{2n})\right)}{\cos(\frac{k\pi}{2n})}\cos(\tfrac{k\pi}{n}) + \frac{x}{n}=: \cJ_{2n}(x).
	\end{equation}
    The function $\cJ_{2n}$ is smooth. The first sum is uniformly bounded for $x \in \R_+$ while the second diverges, so $\cJ_{2n} = 0$ and $\cJ_{2n}(+ \infty) = + \infty$. We claim that $\cJ_{2n}$ is increasing. The intermediate value theorem then gives the existence and uniqueness of the solution of $\cJ_{2n}(x) = \mu $ on $\R_+$. This gives $\frac{W}{\theta} = \cJ_{2n}^{-1}(\mu)$. We then deduce respectively $\theta$ and $W$ from the first and second equations of~\eqref{Eul_Lag__L_fini}. This proves that~\eqref{Eul_Lag__L_fini} has a unique solution. The corresponding temperature is the critical temperature $\theta_c^{(L)}$. \\
    
    It remains to prove that $\cJ_{2n}$ is increasing. Splitting the sum in~\eqref{Eq_J_L} into $2$ sums of size $(n-1)$, we get
	\[
	\cJ_{2n}(x) = \frac{1}{n}\left(x - \tanh(x)\right) + \frac{1}{n}\sum_{k=1}^{n-1}\left(\frac{\tanh\left(x\sin\left(\frac{k\pi}{2n}\right)\right)}{\sin\left(\frac{k\pi}{2n}\right)} - \frac{\tanh\left(x\cos\left(\frac{k\pi}{2n}\right)\right)}{\cos\left(\frac{k\pi}{2n}\right)}\right)\cos\left(\tfrac{k\pi}{n}\right).
	\]
	Its derivative is given by 
	\[
	\cJ_{2n}'(x) = \frac{1}{n}\left(1 - \frac{1}{\cosh^2(x)}\right) +  \frac{1}{n} \sum_{k = 1}^{n-1}\left(\frac{1}{\cosh^2\left(x\sin\left(\frac{k\pi}{2n}\right) \right)} - \frac{1}{\cosh^2\left(x\cos\left(\frac{k\pi}{2n}\right) \right)} \right)\cos(\tfrac{k\pi}{n}).
	\]
	For all $s \in [0, 1]$, the function
	\[
	    \left [ \cosh^{-2}\left(x\sin\left(\tfrac{\pi}{2}s\right) \right) - \cosh^{-2}\left(x\cos\left(\tfrac{\pi}{2}s\right) \right) \right]\cos(\pi s)
	\]
	is positive (both terms are positive if $s \in [0, 1/2]$, and both are negative if $s \in [1/2, 1]$). This shows that $\cJ_{2n}$ is increasing as wanted.

    \medskip

	In the case $N \equiv 1 \mod 4$ (that is $L = 2 \mod 4$) the argument of $h'$ is never null, and we simply have (we write $L = 2N = 4n + 2$)
   \[
    \mu 
	  = -\frac{1}{2n+1}
	  \sum_{k = 1}^{2n+1}\frac{\tanh\left(x\cos(\frac{k\pi}{2n+1})\right)}{\cos(\frac{k\pi}{2n+1})}\cos(\tfrac{2 k\pi}{2n+1}) =: \cJ_{2n+1}(x).
   \]
   We claim again that $\cJ_{2n+1}$ is increasing (see below). However, we now have
   \begin{equation} \label{eq:def:muc}
        \lim_{x \to \infty} \cJ_{2n + 1}(x) = -\frac{1}{2n+1}
	  \sum_{k = 1}^{2n+1}\frac{\cos(\tfrac{2 k\pi}{2n+1})}{\left| \cos(\frac{k\pi}{2n+1}) \right|} =: \mu_c(L).  
   \end{equation}
   If  $\mu \in (0, \mu_c(L))$, we can apply again the intermediate value theorem, and deduce that the equation $\cJ_{2N}(x) = \mu$ has the unique solution $x = \cJ_{2n+1}^{-1}(\mu)$. We deduce as before that there is unique solution of system~\eqref{Eul_Lag__L_fini} in this case. If instead $\mu > \mu_c(L)$, then the system~\eqref{Eul_Lag__L_fini} has no solution.
   
   \medskip
   
   Let us prove that $\cJ_{2n+1}$ is increasing (this will eventually prove that $\mu_c(L) > 0$. Its derivative is given by
   \[
   (2n+1) \cJ_{2n+1}'(x) = - \sum_{k=1}^{2n+1}\frac{\cos\left(\frac{2k\pi}{2n+1}\right)}{\cosh^2\left(x\cos\left(\frac{k\pi}{2n+1}\right)\right)} = - \frac{1}{\cosh^2(x)} - 2 \sum_{k=1}^n \frac{\cos\left(\frac{2k\pi}{2n+1}\right)}{\cosh^2\left(x\cos\left(\frac{k\pi}{2n+1}\right)\right)}.
   \]
   In the last equality, we isolated the $k = 2n+1$ term, and use the change of variable $k' = 2n+1 - k$ for $n+1 \le k \le 2n$. When $1 \le k \le n/2$, we have $\cos\left(\frac{2k\pi}{2n+1}\right) \ge 0$, while $\frac{1}{\sqrt{2}} \le \cos (\tfrac{k \pi}{2n+1}) \le 1$. On the other hand, if $n/2 \le k \le n$, we have $\cos\left(\frac{2k\pi}{2n+1}\right) \le 0$, and $0 \le \cos (\tfrac{k \pi}{2n+1})  \le \frac{1}{\sqrt{2}}$. In both cases, we deduce that
   \[
       \forall k \in \{ 1, \cdots, n \}, \quad  -  \frac{\cos\left(\frac{2k\pi}{2n+1}\right)}{\cosh^2\left(x\cos\left(\frac{k\pi}{2n+1}\right)\right)} \ge - \frac{\cos\left(\frac{2k\pi}{2n+1}\right)}{\cosh^2(\frac{x}{\sqrt{2}})}.
   \]
   Summing over $k$, and using that
   \[
    2 \sum_{k=1}^n \cos\left(\frac{2k\pi}{2n+1}\right) = \sum_{k=1}^{2n} \cos\left(\frac{2k\pi}{2n+1}\right) = \sum_{k=1}^{2n+1} \cos\left(\frac{2k\pi}{2n+1}\right) - 1 = -1,
   \]
   we obtain the lower bound
   \[
    (2n+1) \cJ_{2n+1}'(x) \ge - \frac{1}{\cosh^2(x)} + \frac{1}{\cosh^2(\frac{x}{\sqrt{2}})} \ge 0,
   \]
   which proves that $\cJ_{2n+1}$ is increasing.

   \medskip
   
    Finally, we estimate $\mu_c(L)$, defined in~\eqref{eq:def:muc}. We rewrite $\mu_c(L)$ as
    \[
        \mu_c(L) = \frac{1}{2n+1} \sum_{k=1}^{2n+1} f\left( \tfrac{k}{2n+1} \right) + \frac{1}{2n+1} \sum_{k=1}^{2n+1} \dfrac{1}{\pi | \frac{k}{2n+1} - \frac{1}{2}|}, 
        \quad \text{with} \quad 
        f(s) := \dfrac{\cos(2 \pi s)}{| \cos(\pi s) |} - \frac{1}{\pi | s - \frac{1}{2}|}.
    \]
We recognize a Riemann sum in the first term. Since the function $f$ is integrable on $[0, 1]$ (there is no singularity at $s = \frac{1}{2}$), this term converges to the integral of $f$. For the second term, we recognize a harmonic sum. More specifically, we have
\[
    \frac{1}{2n+1} \sum_{k=1}^{2n+1} \dfrac{1}{\pi | \frac{k}{2n+1} - \frac{1}{2}|} 
    =  \frac{1}{\pi} \sum_{k=1}^{2n+1} \dfrac{1}{ | k - n - \frac{1}{2}|} 
    \sim \frac{2}{\pi} \sum_{k'=1}^n \frac{1}{( k' - \frac{1}{2})} \sim \frac{2}{\pi} \ln(n) \sim \frac{2}{\pi} \ln(L).
\]
  This proves that $\mu_c(L) \sim \frac{2}{\pi}\ln(L)$ at $+\infty$ and completes the proof.
\end{proof}
\bigskip\bigskip

\section{Proofs in the thermodynamic model}

We now focus on the thermodynamic model.

\subsection{Proof of Lemma~\ref{lem:justification_thermo}: Justification of the thermodynamic model}
\label{sec:proof:justification_thermo}

First, we show that this model is indeed the limit of the finite chain model as $L \to \infty$. We denote by $f_\theta^{(2N)}$ the minimum of $g_\theta^{(2N)}$ (so $f_\theta^{(2N)} = \frac{1}{2N} F_\theta^{(2N)}$), and by $\widetilde{f}_\theta$ the minimum of $g_\theta$. Our goal is to prove that $\widetilde{f}_\theta = f_\theta$, where we recall that $f_\theta := \liminf_N f_\theta^{(2N)}$.

\medskip

We denote by $(W_{2N}, \delta_{2N})$ the optimizer of $g_\theta^{(2N)}$, and by $(W_*, \delta_*)$ the one of $g_\theta$. First, from the pointwise convergence $g_\theta^{(2N)}(W, \delta) \to g_\theta(W, \delta)$, we obtain
\[
    \widetilde{f}_\theta = g_\theta (W_*, \delta_*) = \lim_{N \to \infty} g_\theta^{(2N)}(W_*, \delta_*) \ge \lim_{N \to \infty} f_\theta^{(2N)} = f_\theta
\]

For the other inequality, we use that \(h_\theta (x) \le \sqrt{x}+2\theta\ln(2) \), so
\[\begin{aligned} g_\theta ^{(2N)}(W, \delta ) \ge \frac{\mu }{2} \left[ (W-1)^2 + \delta ^2 \right] - \sqrt{W^2 + \delta ^2}-2\theta\ln(2). \end{aligned}\] 
In particular, $g_\theta^{(2N)}$ is lower bounded and coercive, uniformly in $N$. So if $(W_{2N}, \delta_{2N})$ denotes the optimizer of $g_\theta^{(2N)}$, the sequence $(W_{2N}, \delta_{2N})$ is bounded in $\R^2_+$. Up to a not displayed subsequence, we may assume that
\[
    f_\theta = \lim_{N \to \infty} f_\theta^{(2N)} = \lim_{N \to \infty} g_\theta^{(2N)} (W_{2N}, \delta_{2N}), \quad \text{and} \quad
    \lim_{N \to \infty} (W_{2N}, \delta_{2N}) =: (W_\infty, \delta_\infty).
\]
We then have
\[
    f_\theta = \lim_{N \to \infty} g_\theta^{(2N)}(W_{2N}, \delta_{2N}) = \lim_{N \to \infty} g_\theta(W_{2N}, \delta_{2N}) + \lim_{N \to \infty} \left[ g_\theta^{(2N)} - g_\theta \right](W_{2N}, \delta_{2N}).
\]
The first limit converges to $g_\theta(W_\infty, \delta_\infty)$, by continuity of the $g_\theta$ functional. For the second limit, we use that $g_\theta^{(2N)} - g_\theta$ is the difference between an integral and a corresponding Riemann sum. If $\cI_N(s)$ denotes the integrand, this difference is controlled by $\frac{c}{2N} \sup_{s} \| \cI_N'(s) \|$. In our case, $\cI_N(s) = h_\theta(4 W_{2N}^2 \cos^2(\pi s) + 4 \delta_{2N}^2 \sin^2(\pi s))$, whose derivative is uniformly bounded in $N$, since $(W_{2N}, \delta_{2N})$ is bounded. This proves that the last limit goes to zero, hence
\[
    f_\theta = g_\theta(W_\infty, \delta_\infty) \ge \widetilde{f}_\theta.
\]
We conclude that $f_\theta = \widetilde{f}_\theta$. In particular, by uniqueness of the minimizer of $g_\theta$, we must have $(W_\infty, \delta_\infty) = (W_*, \delta_*)$, and the whole sequence $(W_{2N}, \delta_{2N})$ converges to $(W_*, \delta_*)$.


\subsection{Proof of Theorem~\ref{th:main_thermo}: Estimation of the critical temperature.}
\label{sec:proof:main_thermo}

We now study the properties of $\theta_c$, the critical temperature in the thermodynamic limit. Reasoning as in the finite $L$ case, the critical temperature $\theta_c$ can be found by solving the equations in $(W, \theta)$  (compare with~\eqref{Eul_Lag__L_fini})
\begin{equation*}
     \begin{cases} 
         \mu (W - 1) & =\displaystyle \frac{W}{ \pi \theta}  \int_0^{2 \pi}  h' \left(  \dfrac{ W^2 \cos^2 (s)}{\theta^2} \right) \cos^2 (s)  \rd s  \\
         \mu W & = \displaystyle \frac{W}{\pi \theta}  \int_0^{2 \pi}h' \left(  \dfrac{  W^2 \cos^2 (s)}{\theta^2} \right) \sin^2 (s)  \rd s.
     \end{cases}
\end{equation*}
Using again the expression $h'(t) := \frac{\tanh(\sqrt{t})}{\sqrt{t}}$, and splitting the integrals between $(0, 2 \pi)$ into four of size $\pi/2$, this is also
\begin{equation}\label{eq:3eqt_EL}
\begin{cases} 
\mu (W - 1) & = \displaystyle \frac{4}{ \pi }  \int_0^{\pi/2} \tanh \left(  \dfrac{W \cos (s)}{\theta} \right) \cos (s)  \rd s \\
\mu W & = \displaystyle \frac{4}{\pi } \int_0^{\pi/2} \tanh \left( \dfrac{W \cos(s) }{\theta }\right) \frac{\sin^2(s)}{\cos(s)} \rd s.
\end{cases}
\end{equation}

Let us prove that this system always admits a unique solution. The proof is similar to the previous $L \equiv 0 \mod 4$ case. Taking the difference of the two equations gives, with $x := \frac{W}{\theta}$,
	\begin{equation}\label{Eq_of_J}
	\mu = -\frac{4}{\pi}\int_0^{\pi/2} \tanh\left(x\cos(s)\right)\frac{\cos(2s)}{\cos(s)} \rd s =: \cJ\left(x\right),
	\end{equation}
	The function $\cJ$ is derivable on $\R_+$ with derivative  given by
	\[
	\cJ'(x) = \frac{4}{\pi}\int_{0}^{\pi/4}
	\left(\frac{1}{\cosh^2(x\sin(s))} - \frac{1}{\cosh^2(x\cos(s))}\right)\cos(2s) \rd s.
	\] 
	The integrand is positive for all $s \in [0, s/4]$, so $\cJ$ is a strictly increasing function on $\R_+$, and since $\cJ([0 , +\infty)) = [0 , +\infty)$, we get $x = \frac{W}{\theta} = \cJ^{-1}(\mu)$. The first equation of  \eqref{eq:3eqt_EL} gives 
	\[
	\mu(x\theta - 1) = \frac{4}{\pi}\int_{0}^{\pi/2}\tanh\left(x\cos(s)\right)\cos(s)\rd s.
	\]
	This proves that $\theta_c$ is well defined and depends only on $\mu$.\\
	
We now estimate this critical temperature. We are interested in the large $\mu$ limit. First, since $\R\ni u\mapsto\tanh(u)$ is a bounded function, the first equation shows that $\mu(W - 1)$ is uniformly bounded in $\mu$, so $W = 1 + O(\mu^{-1})$ as $\mu \to \infty$. Then, we must have $\theta \to 0$ as $\mu \to \infty$ in order to satisfy the second equation. Using the dominated convergence in the first integral gives
\[
 \frac{4}{ \pi }  \int_0^{\pi/2} \tanh \left(  \dfrac{W}{\theta} \cos (s) \right) \cos (s)  \rd s \xrightarrow[\theta \to 0]{}  \frac{4}{ \pi }  \int_0^{\pi/2} \cos (s)  \rd s = \frac{4}{\pi},
\]
so the first equation gives
\[\begin{aligned} W = 1 + \frac{4}{\pi \mu } + o\left(\frac{1}{\mu}\right). \end{aligned}\] 

We now evaluate the integral of the right-hand side in the second equation, in the limit $\theta \to 0$. It is convenient to make the change of variable $s \mapsto \pi/2 - s$, so we compute
\[
    I(\theta) :=  \int_0^{\pi/2} \tanh \left(  \frac{W}{\theta} \sin(s) \right) \frac{\cos^2(s)}{\sin(s)} \rd s.
\]
In order to evaluate $I(\theta)$ as $\theta \to 0$, we write $I = I_1 + I_2$ with
\[
    I_1 := \int_0^{\pi/2} \tanh \left( \frac{W}{\theta} \sin(s) \right) \frac{\cos(s)}{\sin(s)} \rd s
    \quad \text{and}  \quad
    I_2 := \int_0^{\pi/2} \tanh \left( \frac{W}{\theta} \sin(s) \right) \frac{\cos(s)(\cos(s) - 1)}{\sin(s)} \rd s.
\]

For the first integral, we make the change of variable $u = \frac{W}{\theta} \sin(s)$, and get
\[
    I_1 =  \int_0^{\frac{W}{\theta}} \frac{\tanh \left( u \right)}{u} \rd u = \ln \left( \frac{W}{\theta} \right) + c_1 + o(1), \quad \text{with} \quad c_1 := \int_0^1 \frac{\tanh(u)}{u} + \int_1^\infty \frac{\left(\tanh(u) - 1 \right)}{u} \rd u.
\]
The value of $c_1$ is computed numerically to be $c_1 \approx 0.8188$. For the second integral $I_2$, we remark that the integrand is uniformly bounded in $\theta$ and $s$, so $I_2 = O(1)$. Actually, since $\theta \to 0$, we have, by the dominated convergence theorem that
\[
    I_2 =  \int_0^{\pi/2} \frac{\cos(s)( \cos(s) - 1)}{\sin(s)} \rd s + o(1) = \ln(2) - 1 + o(1).
\]
Altogether, we obtain that
\[
    I(\theta) = \ln \left( \frac{W}{\theta} \right) + c_2 + o(1), \quad \text{with} \quad
    c_2 = c_1 + \ln(2) - 1 \approx 0.512.
\]
Together with the second equation of~\eqref{eq:3eqt_EL}, we obtain
\[
    \mu = \frac{4}{\pi W} \left(  \ln \left( \frac{W}{\theta} \right) + c_2 + o(1) \right)
\]
which gives, as wanted, in the limit $\mu \to \infty$
\[
  \theta_c(\mu) \sim C\exp \left( - \frac{\pi}{4} \mu \right) \mbox{ with } C \approx 0.61385.
\]

\subsection{Proof of Theorem~\ref{th:bifurcation}: study of the phase transition}
\label{bifurcation}

In the previous section, we found the critical temperature. We now study the bifurcation of $\delta$ around this temperature. The critical points of $g_\theta$ are given by the Euler--Lagrange equations
\[
    \begin{cases}
         \mu  \left(  W - 1 \right) & = \displaystyle \frac{W}{ \pi \theta }  \int_0^{2 \pi} h' \left(  \dfrac{ W^2 \cos^2 (s) + \delta^2 \sin^2 (s)}{\theta^2} \right) \cos^2 (s)  \rd s  \\
         \mu W & =  \displaystyle \frac{ W }{\pi \theta}  \int_0^{2 \pi} h' \left(   \dfrac{ W^2 \cos^2 (s) + \delta \sin^2 (s)}{\theta^2} \right) \sin^2 (s)  \rd s.
    \end{cases}
\]
Recall that one can remove the $1$-periodic minimizers by factoring out $\delta$ in the second equation. This gives a set of equation involving $\delta$ through the variable $\Delta := \delta^2$ only. In what follows, we fix $\mu$, and set (we multiply the equations by $\theta/W$ in order to have simpler computations afterwards)
\[
    \cF\left( \theta ; (W, \Delta ) \right) := \begin{cases}
        \displaystyle  \mu \theta \left(  1 - \frac{1}{W} \right)- \frac{1}{ \pi }  \int_0^{2 \pi} h' \left(  \dfrac{ W^2 \cos^2 (s) + \Delta \sin^2 (s)}{\theta^2} \right) \cos^2 (s)  \rd s  \\
        \displaystyle \mu \theta -  \frac{ 1}{\pi}  \int_0^{2 \pi} h' \left(   \dfrac{ W^2 \cos^2 (s) + \Delta \sin^2 (s)}{\theta^2} \right) \sin^2 (s)  \rd s.
    \end{cases}
\]

Recall that $\cF \left( \theta_c ; (W_*, 0) \right) = (0, 0)$, where $W_*$ is the optimal $W$ at the critical temperature. If $\cF \left( \theta ; (W, \Delta) \right) = (0, 0)$ with $\Delta > 0$, the configurations $(W, \pm \sqrt{\Delta})$ are minimizers of $g_\theta$.  If $\cF \left( \theta ; (W, \Delta) \right) = (0, 0)$ with $\Delta < 0$, it does not correspond to a physical solution. \\

We want to apply the implicit function theorem for $\cF$ at the point $(\theta_c ; (W_*, 0))$. In order to do so, we first record all derivatives. We denote by $\cF= (\cF_1, \cF_2)$ the components of $\cF$. The derivatives of $\cF$, evaluated at $\Delta = 0$, $\theta = \theta_c$ and $W = W_*$ are given by
\[
    \begin{cases}
        \partial_W \cF_1 & = \dfrac{\mu \theta_c}{W_*^2} - \dfrac{2 W_*}{\theta_c^2} A \\
        \partial_W \cF_2 & = - \dfrac{2 W_*}{\theta_c^2} B \\
    \end{cases}, \quad
    \begin{cases}
        \partial_\Delta \cF_1 & = - \dfrac{1}{\theta_c^2} B \\
        \partial_\Delta \cF_2 & = - \dfrac{1}{\theta_c^2} C \\
    \end{cases}, \quad \text{and} \quad
    \begin{cases}
        \partial_\theta \cF_1 & = \mu \left(1 - \frac{1}{W_*} \right) + 2 \dfrac{W_*^2}{\theta^3} A \\
       \partial_\theta \cF_2 & = \mu + 2 \dfrac{W_*^2}{\theta_c^3} B \\
    \end{cases}.
\]
where we set (we split the integral in four parts of size $\pi /2$)
\[
    \begin{cases}
        A := \displaystyle \frac{4}{\pi} \int_0^{\pi/2} h'' \left(  \dfrac{ W_*^2 \cos^2 (s)  }{\theta_c^2} \right) \cos^4(s) \rd s  \\
        B := \displaystyle \frac{4}{\pi} \int_0^{\pi/2} h'' \left(   \dfrac{ W_*^2 \cos^2 (s) }{\theta_c^2} \right) \sin^2 (s) \cos^2(s) \rd s \\
        C := \displaystyle \frac{4}{\pi} \int_0^{\pi/2} h'' \left(  \dfrac{ W_*^2 \cos^2 (s)  }{\theta_c^2} \right) \sin^4(s) \rd s.
    \end{cases}
\]
Since $h$ is concave, $A, B$ and $C$ are negative. In addition, by Cauchy-Schwarz, we have 
\begin{equation} \label{eq:BB-AC}
    B^2 \le A C.
\end{equation}

The Jacobian $J := \left( \partial_{(W, \Delta)} \cF \right) (\theta_c; (W_*, 0))$ is of the form
\[
   J = \begin{pmatrix} 
    \frac{\mu \theta_c}{W_*^2}  - \frac{2 W_*}{ \theta_c^2} A & - \frac{1}{\theta_c^2}  B
    \\
    -  \frac{ 2 W_* }{\theta_c^2} B & - \frac{1}{ \theta_c^2}  C
\end{pmatrix}, \quad \text{and} \quad
\det J = - \dfrac{\mu}{W_*^2 \theta_c} C + \dfrac{2W_*}{\theta_c^4} (AC - B^2).
\]
Since $C < 0$ and $B^2 - AC < 0$, we have $\det J > 0$, so $J$ is invertible.
We can therefore apply the implicit function theorem for $\cF$ at $(\theta_c, (W_*, 0))$. There is a function $\theta \mapsto  (W(\theta), \Delta(\theta)) $ so that, locally around $(\theta_c, (W_*, 0))$, we have
\[
    \cF (\theta, (W, \Delta)) = 0 , \quad \text{iff} \quad  (W, \Delta) =  (W(\theta), \Delta(\theta)).
\]
The derivatives $(W'(\theta), \Delta'(\theta))$ are given by
\[
   \begin{pmatrix}
       W'(\theta_c) \\
       \Delta'(\theta_c)  
   \end{pmatrix} = - J^{-1} \begin{pmatrix}
        \partial_\theta \cF_1 \\ \partial_\theta \cF_2
    \end{pmatrix} 
 =   \dfrac{-1}{\det J} \begin{pmatrix} 
      - \frac{1}{ \theta_c^2}  C  &  \frac{1}{\theta_c^2}  B
     \\
       \frac{ 2 W_* }{\theta_c^2} B & \frac{\mu \theta_c}{W_*^2}  - \frac{2 W_*}{ \theta_c^2} A
 \end{pmatrix}
\begin{pmatrix}
     \mu \left(1 - \frac{1}{W_*}\right) + 2 \frac{W_*^2}{\theta_c^3} A \\
    \mu +  2 \frac{W_*^2}{\theta_c^3} B
 \end{pmatrix}.
\]
This gives
\begin{equation} \label{eq:dDelta}
    \Delta'(\theta_c) = \dfrac{-1}{\det J} \left( \frac{2 W_* \mu}{\theta_c^2} \right) \left( (B - A) + \dfrac{\mu \theta_c^3}{2W_*^3}  \right).
\end{equation}
We claim that $B \ge A$ (for the proof see below). This shows that $\Delta'(\theta_c) < 0$. So, restoring the variable $\delta^2$, we have
\[
    \delta^2(\theta) \sim - \Delta'(\theta_c) (\theta_c - \theta)_+ , \quad \text{and finally}, \quad
    \boxed{\delta(\theta) = \sqrt{- \Delta'(\theta_c) } \cdot \sqrt{(\theta_c - \theta)_+} (1 + o(1)).}
\]

It remains to prove that $B \ge A$. This comes from the fact that $h''$ is increasing negative. First, we notice that $| A |$ and $| C |$ are of the form
\[
    | A | = \frac{4}{\pi} \int_0^{\pi/2} f(s) g(s) \rd s, \quad | C | = \frac{4}{\pi} \int_0^{\pi/2} f(s) g(\pi/2 - s),
\]
with $f(s) := \left| h''(W_*^2 \cos^2(s)/\theta_c^2) \right| $ and $g(s) := \cos^4(s)$. The functions $f$ and $g$ are both decreasing on $[0, \frac{\pi}{2}]$. By re-arrangement, we deduce that $| A | > | C |$. Actually, we have
\[
    |A | - | C | = \frac{4}{\pi} \int_0^{\pi/4} \left( f(s)  - f\left(\frac{\pi}{2} - s\right) \right)
    \left( g(s)  - g\left(\frac{\pi}{2} - s\right) \right) > 0.
\]
Together with Cauchy-Schwarz in~\eqref{eq:BB-AC}, this gives $| B |^2 \le | A | \cdot | C | < | A |^2$, since $A$ and $B$ are negative, we get $B > A$, as wanted. This concludes the proof of Theorem~\ref{th:bifurcation}.


\appendix
\section{Gain of energy in the thermodynamic limit}\label{gain_of_energy}

In this section, we prove that the gain of energy due to Peierls dimerization is exponentially small in $\mu$. We focus on the thermodynamic limit case (although the proof is simililar in the $L \in 2 \N$ case). We also focus only on the null temperature case $\theta = 0$. In this case, the thermodynamic energy reads
\begin{equation}\label{CH1_42}
    g_0(W ,\delta) = \frac{\mu}{2} ((W - 1)^2 + \delta^2) - \frac{4}{\pi}\int_{0}^{\pi/2}\sqrt{W^2\sin^2{(s)} + \delta^2\cos^2{(s)}} \rd s.
\end{equation}
We introduce
\[
    f_0 := \min \left\{ g_0(W, \delta), \ W \ge 0, \ \delta \ge 0\right\}, \quad \text{and} \quad    f_{0, {\rm per}} := \min \left\{ g_0(W, 0), \ W \ge 0 \right\}.
\]
In other words, $f_0$ is the minimum of $g_0$ over $2$--periodic (and all) configurations, and $f_{0, {\rm per}}$ is the minimum over $1$-periodic configurations. We prove the following

\begin{theorem}
    There is $C > 0$ such that, for all $\mu$ large enough,
        \[
        0 < f_{0, {\rm per}} - f_0 \le C \re^{- \frac{\pi}{2} \mu}.
        \]
\end{theorem}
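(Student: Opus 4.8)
The plan is to reduce everything to the scalar quantity
\[
  I(W,\delta):=\int_0^{\pi/2}\Big(\sqrt{W^2\sin^2 s+\delta^2\cos^2 s}-W\sin s\Big)\,\rd s ,
\]
for which $g_0(W,\delta)=g_0(W,0)+\tfrac\mu2\delta^2-\tfrac4\pi I(W,\delta)$, and then to trade the distortion penalty against a logarithmically small electronic correction. First I would carry out the $\delta=0$ minimization in~\eqref{CH1_42}: since $\int_0^{\pi/2}\sin s\,\rd s=1$ one gets $g_0(W,0)=\tfrac\mu2(W-1)^2-\tfrac4\pi W$, a quadratic in $W$ with minimum at $W_{\rm per}:=1+\tfrac4{\pi\mu}$ and value $f_{0,{\rm per}}=-\tfrac4\pi-\tfrac8{\pi^2\mu}$; completing the square yields the identity $g_0(W,0)=f_{0,{\rm per}}+\tfrac\mu2(W-W_{\rm per})^2$. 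The strict inequality $f_{0,{\rm per}}-f_0>0$ is then cheap: rationalizing, $g_0(W_{\rm per},\delta)-f_{0,{\rm per}}=\tfrac\mu2\delta^2-\tfrac4\pi\int_0^{\pi/2}\frac{\delta^2\cos^2 s}{\sqrt{W_{\rm per}^2\sin^2 s+\delta^2\cos^2 s}+W_{\rm per}\sin s}\,\rd s$, and bounding the integrand below on $[0,\pi/4]$ by $\frac{\delta^2/2}{2W_{\rm per}s+\delta}$ shows the integral is $\ge c\,\delta^2\ln(1/\delta)$ as $\delta\to0$; hence the right-hand side is negative for $\delta$ small and $f_0\le g_0(W_{\rm per},\delta)<f_{0,{\rm per}}$ (alternatively this is the $\theta\to0$ limit of Theorem~\ref{th:main_thermo}).

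The crucial step is the one-variable upper bound: for $0<\delta\le W$,
\[
  I(W,\delta)\le\frac{\delta^2}{2W}\,\ln\!\Big(\frac{C_0 W}{\delta}\Big)
\]
with an explicit constant $C_0$. I would prove it by rationalizing, $I(W,\delta)=\int_0^{\pi/2}\frac{\delta^2\cos^2 s}{\sqrt{W^2\sin^2 s+\delta^2\cos^2 s}+W\sin s}\,\rd s$, bounding the denominator from below by $\max(2W\sin s,\delta\cos s)$, and splitting at $s_0:=\arctan(\delta/2W)$: on $[0,s_0]$ the integrand is $\le\delta$, contributing $\le\delta s_0\le\tfrac{\delta^2}{2W}$; on $[s_0,\pi/2]$ it is $\le\frac{\delta^2\cos^2 s}{2W\sin s}$, and $\int_{s_0}^{\pi/2}\frac{\cos^2 s}{\sin s}\,\rd s=-\ln\tan(s_0/2)-\cos s_0\le\ln(C_0'W/\delta)$ because $\tan(s_0/2)\gtrsim\delta/W$. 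The analogous splitting from below gives the matching bound $I\ge\tfrac{\delta^2}{4W}\ln\tfrac{W}{C_1\delta}$ used in the previous paragraph.

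Combining, for all $W\ge0$ and $0\le\delta\le W$,
\[
  g_0(W,\delta)-f_{0,{\rm per}}\ \ge\ \frac\mu2(W-W_{\rm per})^2+\frac{\delta^2}{2}\Big(\mu-\frac4{\pi W}\ln\frac{C_0 W}{\delta}\Big).
\]
I would minimize the last term over $\delta\in(0,W]$ explicitly: its critical point is $\delta_*=c_0 W\re^{-\pi W\mu/4}$, with minimal value $-\tfrac2{\pi W}\delta_*^2$, of order $-W\,\re^{-\pi W\mu/2}$ (and when $W\mu$ is too small for $\delta_*\le W$, the constrained minimum over $(0,W]$ is $\ge-C''/\mu$, which is harmless). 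A short case analysis on $W$ then concludes. Using $\max_{x\ge0}x\re^{-x}=\re^{-1}$ one has $W\re^{-\pi W\mu/2}\le\tfrac2{\pi\re\mu}$ unconditionally, so if $|W-W_{\rm per}|\ge C_1/\mu$ (a suitable constant) the penalty $\tfrac\mu2(W-W_{\rm per})^2\ge\tfrac{C_1^2}{2\mu}$ already beats the electronic loss and $g_0(W,\delta)\ge f_{0,{\rm per}}$; if $|W-W_{\rm per}|<C_1/\mu$ then $W\ge1-C_1/\mu$ and $W\le2$ for $\mu$ large, so $\re^{-\pi W\mu/2}\le\re^{\pi C_1/2}\re^{-\pi\mu/2}$ and the loss is $\le C\re^{-\pi\mu/2}$. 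When instead $\delta>W$, the crude bound $\int_0^{\pi/2}\sqrt{W^2\sin^2 s+\delta^2\cos^2 s}\,\rd s\le\tfrac\pi2\sqrt{W^2+\delta^2}$ together with $(W-1)^2+\delta^2\ge\tfrac12(W^2+\delta^2)$ gives $g_0(W,\delta)\ge\tfrac\mu4 r^2-2r\ge-\tfrac4\mu$ with $r=\sqrt{W^2+\delta^2}$, and $-\tfrac4\mu>f_{0,{\rm per}}$ for $\mu$ large. Taking the infimum over $(W,\delta)$ gives $f_0\ge f_{0,{\rm per}}-C\re^{-\pi\mu/2}$, i.e. $f_{0,{\rm per}}-f_0\le C\re^{-\frac\pi2\mu}$.

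The main obstacle is this last case analysis: one must land on the exponent exactly $-\tfrac\pi2\mu$, uniformly in $W$. The delicate point is that for $W<1$ the electronic gain $\sim W\re^{-\pi W\mu/2}$ is exponentially larger than $\re^{-\pi\mu/2}$ and must be absorbed by the harmonic penalty $\tfrac\mu2(W-W_{\rm per})^2$; the argument closes only because the window where $\re^{-\pi W\mu/2}$ is comparable to $\re^{-\pi\mu/2}$ is $|W-1|=O(1/\mu)$, on which the penalty is still negligible, while just outside that window the penalty is already $\gtrsim1/\mu\gg\re^{-\pi\mu/2}$. Making these two regimes overlap so that no gap remains is where care is needed; everything else is elementary calculus on the scalar integral $I$.
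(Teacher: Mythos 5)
Your proposal is correct, and it reaches the theorem by a different technical route than the paper, although the underlying mechanism is the same: the energy difference is controlled below by $\tfrac{\mu}{2}(\varepsilon^2+\delta^2)-\tfrac{2}{\pi W}\,\delta^2\ln(1/\delta)+\dots$, and optimizing in $\delta$ produces the exponent $\tfrac{\pi}{2}\mu$. The paper first assumes the trial pair improves on $f_{0,\rm per}$, deduces the a priori localization $\delta<\tfrac{8}{\pi\mu}$, $|\varepsilon|<\tfrac{4}{\pi\mu}$, and then evaluates the electronic integral exactly through the complete elliptic integral $E(1-a)$ and its expansion $E(1-a)=1+\bigl(-\tfrac{\ln a}{4}-\tfrac14+\ln 2\bigr)a+O(a^2)$; this is short and gives the sharp constant in front of the logarithm immediately, but it leans on that localization and on uniformity of the $O(a^2)$, $o(1)$ remainders. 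You instead prove an elementary, globally valid two-sided bound on the rationalized integral $I(W,\delta)$, the crucial point being that your upper bound keeps the exact coefficient $\tfrac{1}{2W}$ in front of $\ln(C_0W/\delta)$ (any loss there would degrade the exponent, not just the prefactor $C$), and then you close with a global case analysis over $(W,\delta)$ covering $\delta>W$, small $W\mu$, and the two regimes $|W-W_{\rm per}|\gtrless C_1/\mu$; your verification that the crossover window $|W-1|=O(1/\mu)$ is exactly where the harmonic penalty is still negligible but $\re^{-\pi W\mu/2}\simeq \re^{-\pi\mu/2}$ is the right way to make the exponent uniform in $W$, and your lower bound on $I$ at $W=W_{\rm per}$ gives the strict inequality $f_0<f_{0,\rm per}$ independently of the zero-temperature results cited in the paper. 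Two small points: the constrained minimum of $\tfrac{\delta^2}{2}\bigl(\mu-\tfrac{4}{\pi W}\ln\tfrac{C_0W}{\delta}\bigr)$ at the critical $\delta_*$ is $-\tfrac{1}{\pi W}\delta_*^2$, not $-\tfrac{2}{\pi W}\delta_*^2$ (harmless, constants only, and in fact since $\delta_*$ is the global minimizer on $(0,\infty)$ the separate discussion of $\delta_*>W$ is not even needed); and the inequality $(W-1)^2+\delta^2\ge\tfrac12(W^2+\delta^2)$ you use for $\delta>W$ is false in general but true precisely in that regime, so it is worth flagging that you use $\delta\ge W$ there. The trade-off: the paper's argument is more compact and yields explicit constants from the elliptic expansion; yours is longer but entirely elementary and uniform, with no appeal to special functions or to an a priori restriction to improving configurations.
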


In other words, the energy gained by the Peierls distorsion is exponentially small in the $\mu$ parameter. The first inequality states that in the thermodynamic limit at null temperature, the minimizers are always dimerized, as first proved by Kennedy and Lieb~\cite{kennedy2004proof}.

\begin{proof}
    Let us first compute $W_1$, the optimizer of $g_0(W, 0)$. This is simply the minimum of 
    $$
    g_0(W,0) = \frac{\mu}{2} (W - 1)^2 - \frac{4}{\pi}\int_0^{\pi/2}\sqrt{W^2\sin^2(s)} \rd s = \frac{\mu}{2}\mu(W - 1)^2 - \frac{4}{\pi}W.
    $$
    The minimizer  satisfies $\mu(W_1 - 1) = \frac{4}{\pi}$, hence $W_1 =  1 + \frac{4}{\pi \mu}$. In particular,
    $$  f_{0, {\rm per}} = -\frac{4}{\pi} - \frac{8}{\pi^2\mu}.
    $$
    We now compute the energy gain from the breaking of periodicity. For $(W, \delta)$ a trial pair, we write $W = W_1 + \varepsilon.$ 
    We assume that $g_0(W, \delta) < g_0(W_1, 0)$. Then
    \begin{align*}
         g_0(W, \delta) - g_0(W_1, 0)  &= \frac{\mu}{2}(\varepsilon^2 + \delta^2) \\ &\quad -\frac{4W_1}{\pi}\int_0^{\pi/2}\left[ \sqrt{ \frac{(W_1 + \varepsilon)^2}{W_1^2} + \frac{\delta^2}{W_1^2}\cot^2(s)} - 1 - \frac{\varepsilon}{W_1} \right]\sin(s) \rd s\\
        &\geq \frac{\mu}{2}(\varepsilon^2 + \delta^2) -  \frac{4\delta}{\pi},\;\hbox{ so }\;\delta<\frac{8}{\pi\mu}\;\hbox{ and }\;|\varepsilon|<\frac{4}{\pi\mu}.
    \end{align*}
    To compute the integral, we make the change of variable $u=\cos(s),$ and get that the integral equals
    \begin{align*}
        \frac{W_1 + \varepsilon}{W_1}\left( \int_0^1\sqrt{1+ \frac{au^2}{1-u^2}} \rd u -1 \right), \mbox{ with } a:= \left(\frac{\delta}{W_1 + \varepsilon}\right)^2.
    \end{align*}
    Using that
    \[
        \int_0^1\sqrt{1+ \frac{au^2}{1-u^2}} \rd u  = E(1 - a) = 1 + \left(\frac{-\ln(a)}{4}  -\frac{1}{4} + \ln(2) \right)a + O(a^2),
    \]
    where $E$ is a complete elliptic integral of the second kind, we get
     \begin{align*}
        g_0(W, \delta) - g_0(W_1, 0) &=   \frac{\mu}{2}(\varepsilon^2 + \delta^2)  - \frac{4\delta^2}{\pi(W_1 + \varepsilon)} \left[-\frac{1}{2}\ln\left(\frac{\delta}{W_1 + \varepsilon}\right) -\frac{1}{4} + \ln(2) +  O(a)\right]\\
         & =  \frac{1}{2}\mu(\varepsilon^2 + \delta^2) - \frac{2}{\pi W_1 }\delta^2\ln(\delta^{-1})\left(1+o(1)\right).
    \end{align*}
    We now minimize the right-hand side. For large $\mu$, we have $W_1=1+o(1)$ and the minimization in $\varepsilon$ gives $\varepsilon = 0$. So 
    $$g_0(W, \delta) - g_0(W_1, 0) \geq \delta^2\left(\frac{\mu}{2} - \frac{2\ln(\delta^{-1})}{\pi}\left(1+o(1)\right)\right).$$
    We optimize the right-hand side by taking $\delta = \re^{-(\frac{\pi}{4}\mu + \frac{1}{2})} $, and this completes the proof.
    
\end{proof}

\bibliographystyle{plain}
\bibliography{biblio}

\end{document}